\colorlet{DarkRed}{red!50!black}
\colorlet{DarkGreen}{green!50!black}
\colorlet{DarkBlue}{blue!50!black}
\DeclareMathOperator{\im}{IM}
\DeclareMathOperator{\pim}{pIM}
\DeclareMathOperator{\apim}{apx-pIM}
\DeclareMathOperator{\ipim}{iIM}
\DeclareMathOperator{\aipim}{apx-ipIM}
\DeclareMathOperator{\demp}{dp}
\DeclareMathOperator{\NP}{NP}
\DeclareMathOperator{\PP}{P}
\newcommand{\IM}{\ensuremath{\im}\xspace}
\newcommand{\IMDP}{\ensuremath{\im^{\demp}}\xspace}
\newcommand{\PIMDP}{\ensuremath{\pim^{\demp}}\xspace}
\newcommand{\iPIMDP}{\ensuremath{\ipim^{\demp}}\xspace}
\newcommand{\iPIMDPa}{\ensuremath{\aipim^{\demp}}\xspace}
\newcommand{\PIMDPa}{\ensuremath{\apim^{\demp}}\xspace}
\newcommand{\LP}{\ensuremath{\PP_\lambda}\xspace}
\newcommand{\LPQ}{\ensuremath{\PP_\QQQ}\xspace}
\newcommand{\algo}[1]{\texttt{\textbf{#1}}}
\DeclareMathOperator{\opt}{opt}
\DeclareMathOperator{\argmax}{argmax}
\DeclareMathOperator{\E}{\mathbb{E}}
\DeclareMathOperator{\pof}{PoF}
\let\epsilon\varepsilon
\let\eps\varepsilon
\newcommand{\ones}{\mathds{1}}
\newcommand{\NN}{\ensuremath{\mathbb{N}}}
\newcommand{\RR}{\ensuremath{\mathbb{R}}}
\newcommand{\CCC}{\mathcal{C}}
\newcommand{\LLL}{\mathcal{L}}
\newcommand{\LL}{\mathcal{M}}
\newcommand{\SSS}{\mathcal{S}}
\newcommand{\III}{\mathcal{I}}
\newcommand{\PPP}{\mathcal{P}}
\newcommand{\QQQ}{\mathcal{Q}}
\definecolor{orange}{RGB}{235,90,0}
\definecolor{darkorange}{RGB}{175,30,0}
\definecolor{turkis}{RGB}{131,182,182}
\definecolor{darkturkis}{RGB}{31,82,82}
\definecolor{green}{RGB}{102,180,0}
\definecolor{darkgreen}{RGB}{51,90,0}
\definecolor{myblue}{RGB}{0,0,213}
\definecolor{mydarkblue}{RGB}{0,0,100}
\definecolor{mybrightblue}{HTML}{74B0E4}
\definecolor{mybrighterblue}{HTML}{B3EAFA}
\definecolor{lila}{RGB}{102,0,102}
\definecolor{darkred}{RGB}{139,0,0}
\definecolor{darkyellow}{RGB}{188,135,2}
\definecolor{brightgray}{RGB}{200,200,200}
\definecolor{darkgray}{RGB}{50,50,50}
\definecolor{amaranth}{rgb}{0.9, 0.17, 0.31}
\definecolor{alizarin}{rgb}{0.82, 0.1, 0.26}
\definecolor{amber}{rgb}{1.0, 0.75, 0.0}
\definecolor{green(ryb)}{rgb}{0.4, 0.69, 0.2}
\definecolor{hanblue}{rgb}{0.27, 0.42, 0.81}
\definecolor{grannysmithapple}{rgb}{0.66, 0.89, 0.63}
\newtheorem{theorem}{Theorem}[section]
\newtheorem{lemma}[theorem]{Lemma}
\newtheorem{theorem-rst}[theorem]{Theorem}
\newtheorem{lemma-rst}[theorem]{Lemma}
\newcommand\footnoteref[1]{\protected@xdef\@thefnmark{\ref{#1}}\@footnotemark}
\title{On the Cost of Demographic Parity in Influence Maximization}
\date{}
\begin{document}
\author[1]{Ruben Becker}
\author[2]{Gianlorenzo D'Angelo}
\author[2]{Sajjad Ghobadi}
\affil[1]{\normalsize Ca' Foscari University of Venice, Italy}
\affil[2]{Gran Sasso Science Institute, L'Aquila, Italy}

\maketitle

\begin{abstract}
	Modeling and shaping how information spreads through a network is a major research topic in network analysis. While initially the focus has been mostly on efficiency, 
	recently fairness criteria have been taken into account in this setting.
	Most work has focused on the maximin criteria however, and thus still different groups can receive very different shares of information. In this work we propose to consider fairness as a notion to be guaranteed by an algorithm rather than as a criterion to be maximized. To this end, we propose three optimization problems that aim at maximizing the overall spread while enforcing strict levels of demographic parity fairness via constraints (either ex-post or ex-ante). The level of fairness hence becomes a user choice rather than a property to be observed upon output. We study this setting from various perspectives.
	First, we prove that the cost of introducing demographic parity can be high in terms of both overall spread and computational complexity, i.e., the price of fairness may be unbounded for all three problems and optimal solutions are hard to compute, in some case even approximately or when fairness constraints may be violated. 
	For one of our problems, we still design an algorithm with both constant approximation factor and fairness violation.
	We also give two heuristics that allow the user to choose the tolerated fairness violation. By means of an extensive experimental study, we show that our algorithms perform well in practice, that is, they achieve the best demographic parity fairness values. For certain instances we additionally even obtain an overall spread comparable to the most efficient algorithms that come without any fairness guarantee, indicating that the empirical price of fairness may actually be small when using our algorithms.
\end{abstract}
	
\section{Introduction}
The internet and particularly online social networks play a central role in how people acquire information nowadays, be it information about political, social, financial, or cultural matters. Several research fields, including mathematics, physics, and computer science, have found interest in analyzing how information spreads through networks. Besides abstractions to (probabilistically) model information spread, the main contributions of computer science in this context have been algorithmic ones. Among them, probably most importantly, the question on how to spread information \emph{efficiently} through a network. More precisely, given a social network and a probabilistic model on how information propagates through it, the main addressed question has been the following: Which \emph{seed set} of size at most $k$ (an input parameter) to target such that the expected number of nodes that obtain the information is maximized, when the information spreads from the chosen seed set? This problem, called \emph{influence maximization}, has received a lot of attention by computer science researchers in diverse communities, including algorithms (e.g.~\cite{kempe,BorgsBCL14,SadehCK20}), artificial intelligence (e.g.~\cite{wilder2018end,yadav2018bridging,BeckerCDG20}), and data and graph mining (e.g.~\cite{CohenDPW14,TangXS14, TangSX15,chen2017interplay,WuLWCW19}). As a result the problem is well understood from many perspectives, among them theoretical complexity, approximation algorithms, adaptivity, and practically efficient implementations.

As access to information via social networks may have a big impact on our life, see, e.g.~\cite{banerjee2013diffusion}, researchers have taken also \emph{fairness} issues with respect to information spread into account, see the related work below for a non-exhaustive list. In these works, the social network is composed of individuals or groups of individuals (called communities) and the goal is to provide similar information access to all of them.  In other words, the focus is not restricted to the efficiency of the information spread, but rather on assuring that each of the communities gets its fair share of information (or coverage). Here, an essential question arises, namely: What do we mean by fair? There is a large variety of fairness notions~\cite{barocas-hardt-narayanan}
and in fact different notions have been investigated also in this scope, with the most common one being the maximin criterion~\cite{TsangWRTZ19,Fish19,BeckerDGG21}. Here, the goal is to develop algorithms that maximize the minimum coverage of any community or individual (the special case of singleton communities). In some works, where the focus is on communities, this notion is also referred to as \emph{group fairness} or \emph{demographic parity}. What all three previously mentioned works, have in common however is that they consider fairness as a measure to be optimized, namely via maximizing the minimum coverage.

This raises, however, a conceptual question. When maximizing the minimum coverage, we may still end up in a situation where the values of two groups differ a lot. More precisely, consider an example with two groups, say $C$ and $D$. All the three mentioned approaches would prefer an outcome where $C$ gets a coverage of $0.5$ while $D$ gets a coverage of $1$ over an outcome where both receive a coverage of $0.499$. Now, while fairness is a debatable concept, the second outcome may be considered more fair by many. In fact, if we take a closer look at what is typically understood under group fairness or demographic parity, for example in the machine learning community, see, e.g., Definition~1 in Chapter~2 in the book by Barocas, Hardt, and Narayanan~\cite{barocas-hardt-narayanan}, we observe that, demographic parity (also referred to as independence) is actually defined as \emph{equality} in probability of being selected conditioned on group membership. In the above example, this is satisfied in the second outcome, but far from being satisfied in the first.  More fundamentally, the following question arises. In all of these works fairness is considered as a notion to be optimized. But is this the right way of considering fairness? Is fairness not instead something that we want algorithms to \emph{guarantee}, i.e., don’t we want to restrict algorithms to satisfy certain levels of fairness independent of their objective?

\subsection{Our contribution}
In this work, we adopt a different and more strict view on fairness, that is, we consider fairness as a requirement that has to be ensured by the algorithm rather than a notion to be maximized. In terms of the optimization problems at hand, this results in fairness being taken into account via constraints instead of in the objective function, the obvious advantage being that the resulting fairness violation is strictly bounded. More precisely, we develop optimization problems that aim to maximize the overall spread (or coverage) while ensuring that the coverage of all groups is identical, in this way enforcing demographic parity.

While such a strict fairness notion may easily result in infeasibility, we show how to bypass this problem by using an approach popular in economics and computational social choice: we study also \emph{ex-ante fairness} rather than just ex-post fairness. This approach, that was first used in the context of influence maximization by Becker et al.~\cite{BeckerDGG21}, allows probabilistic rather than deterministic solutions, i.e., distributions over seed sets instead of single sets. Then the expected group coverage when a set is sampled according to this distribution is considered instead of simply the group coverage of a group from a single seed set. This approach is not only useful for the purpose of feasibility, but instead offers various advantages, see, e.g.,~\cite{brandl2016consistent,aziz2013popular,bogomolnaia2001new,katta2006solution}. See also the illustrative example of Machina~\cite{machina1989dynamic}, where a parent assigns an (indivisible) treat to one of two children.

It is clear that such a strict approach to fairness as adopted here may lead to a big loss in efficiency, i.e., in overall spread and possibly also in time complexity of respective algorithms. One of our contributions, is to rigorously analyze these two kinds of loss. We in fact prove in Section~\ref{sec: fairness} that the \emph{price of fairness} may be unbounded in this context. We then proceed by studying the complexity of the proposed optimization problems, more precisely their approximation properties. This includes both proving hardness of approximation results, see Section~\ref{sec: hoa}, and developing an approximation algorithm, see Section~\ref{sec: apx algorithms}. Our study here explicitly includes bi-criteria approximation, that is, we relax the fairness constraints or allow them to be violated within a limited amount (multiplicatively or additively). This permits us to propose algorithms that entitle the user to choose the tolerated amount of fairness violation freely instead of observing the fairness violation upon seeing the output of the algorithm. We proceed by developing efficient heuristics for the proposed problems and conclude with a detailed experimental study on the performance of the developed algorithms both in terms of efficiency and fairness in Section~\ref{sec: experiments}. For our experiments, we use random, synthetic, and real-world data sets. Our experimental study shows that although our theoretical results are mainly pessimistic, our algorithms achieve a trade-off between fairness and overall coverage and in some cases even achieve similar coverage as state-of-the-art influence maximization algorithms while guaranteeing fairness on top.

\subsection{Related Work}
Fish et al.~\cite{Fish19} were the first to study the maximin criterion in influence maximization, they focus on individual fairness and show $\NP$-hardness as well as that the problem is hard to approximate unless $\PP = \NP$. Tsang et al.~\cite{TsangWRTZ19} study the maximin criterion with respect to groups. They give an algorithm with asymptotic approximation factor $1-1/e$ in the setting where there are $o(k \log^3 k)$ communities. 
The work that is probably closest to ours is the one by Becker et al.~\cite{BeckerDGG21}.
Also this work uses the maximin criterion for group fairness, rather than demographic parity in the exact sense of its definition. Still, similar to ours, this work allows probabilistic seeding strategies. The authors show that two probabilistic variants of the maximin criterion are approximable within roughly a factor of $1-1/e$.

Stoica and Chaintreau~\cite{StoicaC19}
define ``fairness in outreach'' that is essentially equivalent to demographic parity. Their work however does not introduce tailored algorithms but is instead more focused on analyzing the fairness achieved by standard algorithms for influence maximization. Farnadi, Barbaki, and Gendreau~\cite{FarnadiBG20} propose a framework for fair influence maximization that is based on mixed integer linear programs (MILPs). Their framework, that is unlikely to be applicable to large instances, captures various notions of group fairness, including ``equity'', which again coincides with demographic parity. In fact, they restrict their experimental study to the relatively small synthetic networks from the work of Wilder et al.~(\cite{WilderOHT18}). 
Ali et al.~\cite{ali2019fairness} address fairness in influence maximization within a time-critical setting. The authors also consider fairness notions that are similar to demographic parity, but instead of maintaining the fairness constraints, they pass the group coverages through some monotone concave function and include it in the objective.

Gershtein et al.~\cite{GershteinMY21} introduce multi-objective influence maximization problem that aims to maximize the influence of each group in the network. The authors propose two algorithms by splitting the budget (i.e., seed set size) between the groups to get the desired influence and linear program of maximum coverage.
Stoica et al.~\cite{StoicaHC20} investigate that how diversity in seed selecting strategy can influence efficiency and fairness of the diffusion process with respect to the communities. In a network consisting of two unequal communities that is generated using a biased preferential attachment model, the authors show that having a diverse seed set can lead to fair solutions. 
Anwar et al.~\cite{AnwarSR21} investigate that how existence of structural and influence homophily can affect the influence among different groups on homophilic networks. The authors then propose an objective function which maximizes the total influence while minimizing disparity across different groups in receiving information.
Rahmattalabi et al.~\cite{RahmattalabiJLV21} study maximin fairness through welfare theory in the context of influence maximization. The authors define a utility vector over the nodes using the expected probability that communities are reached, and maximize a welfare function that is defined over the utility vector. 
Khajehnejad et al.~\cite{KhajehnejadRBHJ20} use machine learning techniques to study fairness in influence maximization. The authors proposed an adversarial network embedding approach to select a set of seed nodes that maximizes spread and fairness between different communities.
Wang et al.~\cite{wang2021information} study the problem of information access equality in order to reach each group at similar rate.
In their setting, networks consist of two specific groups and are generated with different properties. The authors experimentally measure the efficiency and equality of receiving information between groups under different diffusion models.

\section{Preliminaries}


\paragraph*{Information Diffusion.}
In the classical influence maximization setting, we are given a directed graph \(G=(V, E)\) with $|V| = n$ and edge weights \(\{w_{e} \in [0,1]: e \in E\}\).
We use the \emph{Triggering model}~\cite{kempe} for describing the random process of information diffusion. The Triggering model is a generalization of both the \emph{Independent Cascade (IC)} and \emph{Linear Threshold (LT) models}. For a \emph{seed set} \(A\subseteq V\), the spread $\sigma(A)$ from $A$ is the expected number of nodes reached from $A$ in a random sample of \emph{triggering sets} which is constructed as follows.
Every node \(v\in V\) independently picks a \emph{triggering set} \(T_v\) among its in-neighbors \(N_v\) according to some distribution.
Let \(L = (T_v)_{v\in V}\) be a possible outcome of sampled triggering sets; $L$ defines a \emph{live-edge graph} $G_L = (V, E_L)$, where $E_L = \{(u, v)|v \in V, u \in T_v\}$. Then \(\rho_L(A)\) is the set of nodes reachable from \(A\) in \(G_L\) and the \emph{expected spread of $A$} is \(\sigma(A):=\E_\LLL[|\rho_\LLL(A)|]\), where $\mathcal{L}$ denotes a random live-edge graph. We also use the term \emph{overall coverage} for the expected fraction of reached nodes $\sigma(A)/|V|$.
We obtain the IC model from the Triggering model if, for each edge $(u,v)$, the node $u$ is added to the $T_v$ with probability $w_{uv}$.
Differently, in the LT model each $v$ picks at most one of its in-neighbors $u$ with probability $w_{(u,v)}$.

\paragraph{Approximation Algorithms.}
For $N\in\NN$, we use $[N]$ to denote the integers from $1$ to $N$. 
We will consider maximization problems of the form $\max\{F(x): x\in R \mbox{ and } \exists \gamma:  A_i(x)=\gamma \ \text{for all}\ i\in[m]\}$, where $R$ is a feasibility region, the functions $A_i:R\rightarrow \RR_{\ge 0}$, for $i\in[m]$, define a set of (additional) constraints, and $F:R\rightarrow \RR_{\ge 0}$ is an objective function. 
We consider approximation algorithms (possibly) with constraint violation. 
Let $\alpha, \beta\in (0,1]$ be real values. 
Then, we say that $x\in R$ is \emph{$\beta$-feasible} if $A_i(x)\ge \beta A_j(x)$ for all pairs of $i, j \in [m]$. We say that $x\in R$ is an \emph{$(\alpha,\beta)$-approximation} if $x$ is $\beta$-feasible and $F(x)\ge \alpha \opt$, where $\opt$ is the optimum value. We call an algorithm a \emph{$(\alpha,\beta)$-approximation algorithm}, if it is a polynomial-time algorithm whose output solutions are $(\alpha,\beta)$-approximations.

\section{Influence Maximization under Demographic Parity}\label{sec: fairness}
In the classical influence maximization problem (\IM), given a graph $G$ and an integer $k$, the objective is to find a set of $k$ seeds that maximizes the expected spread, i.e., $\max_{S\in\SSS}\{\sigma(S)\}$, where $\SSS:=\{S\subseteq V : |S|\le k\}$ is the set of subsets of nodes of size at most $k$. We refer to the optimal value of this optimization problem as $\opt(G, k)$.
\paragraph{Requiring Demographic Parity.}
In our setting, in addition to $G$ and $k$, we are given a \emph{community structure} $\CCC$ that is a set of $m$ non-empty communities $C \subseteq V$. Notice that communities may neither be disjoint nor cover the whole node set. Our goal now is to find a set $S$ of size at most $k$ that maximizes the total spread while the fraction of reached nodes in each community is the same among all communities, i.e., achieving perfect demographic parity. To make this formal, we introduce \(\sigma_v(S):=\Pr_\LLL[v \in \rho_\LLL(S)]\) as the probability that node $v$ is reached from \(S\). Note that the expected spread is the sum over all these probabilities, i.e., \(\sigma(S)=\E_\LLL[|\rho_\LLL(S)|] = \sum_{v\in V} \sigma_v(S)\). For a community $C\in\CCC$, we then denote by $\sigma_C(S) := \frac{1}{|C|}\cdot\sum_{v\in C} \sigma_v(S)$ the average probability of nodes being reached in $C$ or equivalently this is the \emph{expected group coverage of $\CCC$}, i.e., the expected fraction of nodes from $\CCC$ that are reached.
We are now ready to formally define our first optimization problem, we refer to it as $\IMDP$, standing for influence maximization under demographic parity:
\begin{align*}
	\max_{S\in\SSS} \big\{\sigma(S): \exists \gamma:  \sigma_C(S)= \gamma \text{ for all } C \in \mathcal{C} \big\}.\tag{$\IMDP$}
\end{align*}
For an instance, consisting of a graph $G$, communities $\CCC$, and an integer $k$, we call $\opt_\SSS(G,\CCC,k)$ the optimum of \IMDP.

\paragraph{Fairness via Randomization.}
In addition to \IMDP, we define optimization problems that permit randomized strategies in the seed selection process rather than only deterministic ones, in an analogous way to what Becker et al.~\cite{BeckerDGG21} did for the maximin criterion. Inspired by Becker et al., we introduce two different probabilistic settings, a general one and one that chooses seed nodes independently. 

In the first problem, \PIMDP, standing for probabilistic influence maximization under demographic parity, feasible solutions are distributions over node sets. Formally, we let $\mathcal{P} := \{p \in [0,1]^{2^V}: \ones^Tp  = 1, \sum_{S \subseteq V} p_S |S|\le k\}$ be the set of 
distributions over node sets of expected size at most $k$ and denote by $S \sim p$ the random process of sampling $S$ according to $p\in\PPP$. Now, the goal in \PIMDP is to find the distribution $p \in \PPP$ that maximizes the expected number of reached nodes, while ensuring that perfect demographic parity is satisfied in expectation, i.e., that the expected probability to be reached is the same among all communities. Formally, \PIMDP is defined as
\begin{align*}
 \max_{p \in \PPP}\{\sigma(p): \exists \gamma \text{ s.t.\ }\sigma_{C}(p) = \gamma \text{ for all } C \in \CCC\},\tag{$\PIMDP$}
\end{align*}
where we extend set functions to vectors in a straightforward way, i.e., for a set function $f$, we let $f(p):=\E_{S\sim p} [f(S)]$. For an instance $G,\CCC, k$, $\opt_\PPP(G,\CCC,k)$ is the optimum.

In the second probabilistic variant of \IMDP, we restrict to independent probability distributions, that is, in a feasible solution each node is selected as a seed independently with some probability in such a way that the expected size of the seed set is at most $k$. Formally, we let $\III:=\{x \in [0,1]^n: \ones^Tx\le k\}$ and, for $x\in\III$, we denote with $S\sim x$ the process of randomly generating a set $S$ from $x$, where each $i$ is included in $S$ independently with probability $x_i$. We then obtain independent probabilistic influence maximization under demographic parity problem $\iPIMDP$ as:
\begin{align*}
 \max_{x \in \III}\{\sigma(x): \exists \gamma \text{ s.t.\ } \sigma_{C}(x) = \gamma \text{ for all } C \in \CCC\}, \tag{$\iPIMDP$}
\end{align*}
where again for a set function $f$ and a vector $x\in \III$, we let $f(x):=\E_{S\sim x} [f(S)]$. Again, for an instance $G,\CCC,k$, we denote with $\opt_\III(G,\CCC,k)$ the optimum of \iPIMDP.

Finally, we note that Becker et al.~\cite{BeckerDGG21} refer to the two variants of the above problems in their setting of the maximin criterion as set-based and node-based problem.


\paragraph{Demographic Parity vs.\ Maximin.}
We proceed by giving an example that illustrates that considering the maximin criterion as done by Becker et al.\ and demographic parity in our strict sense can lead to drastically different outcomes. More precisely, we construct an instance where the optimal maximin solution suffers linear multiplicative violation in demographic parity, while achieving an expected coverage that is only around twice as good as a solution that achieves perfect demographic parity. This is formalized below.
\begin{lemma}\label{lem: maxmin dempar}
    Let $\eps>0$. There is an instance $G,\CCC, k$ with $n$ nodes, in which the optimal maximin strategy achieves an overall expected coverage of $2+\eps$, but suffers a violation in demographic parity of $(n-1)/(1+\eps)=\Theta(n)$. On the other hand, $\opt_\PPP(G,\CCC,k)=(n+1) / (n-\eps) = 1 + \Theta(1/n)$.
\end{lemma}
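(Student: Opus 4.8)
The plan is to exhibit one explicit instance in which a community $C$ is reached essentially for free while a community $D$ is starved by the budget. I would take $V=\{c,d_1,\dots,d_{n-1}\}$, $\CCC=\{C,D\}$ with $C=\{c\}$ and $D=\{d_1,\dots,d_{n-1}\}$, budget $k=1$, a weight-$1$ edge $d_i\to c$ for every $i$, and---to realise the parameter $\eps$---a weight-$\eps'$ edge $c\to d_j$ for every $j$, where $\eps':=\eps/(n-2)$. The heart of the argument is three facts, each of which holds \emph{verbatim} for every (possibly correlated) $p\in\PPP$: (i) $c$ is reached precisely when at least one seed is placed, so $\sigma_C(p)=\Pr_{S\sim p}[S\neq\emptyset]$; (ii) a node $d_i$ is reached only by seeding it or by reaching $c$ and traversing the $\eps'$-edge, which gives $(n-1)\,\sigma_D(p)=(1-\eps')\,\E|S\cap D|+(n-1)\,\eps'\,\Pr[S\neq\emptyset]$; and (iii) $\E|S\cap D|=\sum_i\Pr[d_i\in S]\le(n-1)\,\Pr[S\neq\emptyset]$, so $\sigma_C(p)\ge\sigma_D(p)$ always.

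For the maximin part, (iii) collapses the maximin value to $\max_p\sigma_D(p)$; combining (ii) with $\E|S\cap D|\le\E|S|\le k=1$ and $\Pr[S\neq\emptyset]\le1$ bounds this by $(1+\eps)/(n-1)$, and equality forces $\E|S|=1$ and $S\subseteq D$ almost surely. The strategy that seeds one uniformly random $d_i$ attains the bound with $\sigma_C=1$; hence it is maximin-optimal, has overall spread $2+(n-2)\eps'=2+\eps$ (the seed, $c$, and each other $d_j$ with probability $\eps'$), and has multiplicative demographic-parity violation $\sigma_C/\sigma_D=(n-1)/(1+\eps)=\Theta(n)$.

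For $\opt_\PPP$ I would impose the demographic-parity constraint $\sigma_C(p)=\sigma_D(p)$ on (i) and (ii): after cancelling the factor $1-\eps'$ it reads $\Pr[S\neq\emptyset]=\E|S\cap D|/(n-1)\le 1/(n-1)$, whence $\sigma(p)\le n\cdot\Pr[S\neq\emptyset]\le n/(n-1)$; the strategy ``seed all of $D$ with probability $1/(n-1)$, else seed $\emptyset$'' is feasible, has $\sigma_C=\sigma_D=1/(n-1)$, and reaches all $n$ nodes when it fires, so $\opt_\PPP=1+\Theta(1/n)$ (the exact value $(n+1)/(n-\eps)$ coming out of the same computation after a small recalibration of the weights and community sizes). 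There is no single hard step here: the real design work is choosing the gadget so that facts (i)--(iii) hold for arbitrary, not merely product, distributions---this is what makes the \PIMDP upper bound painless and rules out clever correlated seedings---after which everything is a one-line computation, the only residual nuisance being the arithmetic that produces the exact constants in the statement.
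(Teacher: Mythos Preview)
Your construction is correct and yields the asymptotic claims: maximin spread $2+\eps$, violation $(n-1)/(1+\eps)$, and $\opt_\PPP = n/(n-1) = 1+\Theta(1/n)$. Your exact $\opt_\PPP$ value differs from the stated $(n+1)/(n-\eps)$, but the paper's own proof also does not hit that constant exactly (it gives the lower bound $n/(n-1-\eps)$ and stops), so your ``recalibration'' remark is no worse off than the original.

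The route is genuinely different from the paper's. The paper takes \emph{singleton} communities and an outward star: a center $v$ with edges $v\to u_i$ of weight $(1+\eps)/N$; it asserts (without argument) that the maximin optimum seeds $v$ deterministically, and for $\opt_\PPP$ it only exhibits a feasible distribution as a lower bound. You instead use \emph{two} communities and an inward hub (weight-$1$ edges $d_i\to c$ plus small back-edges). What your gadget buys is rigour: facts (i)--(iii) let you characterise \emph{all} maximin-optimal strategies and derive a matching upper bound on $\opt_\PPP$, so you actually pin down $\opt_\PPP$ exactly rather than bounding it from below. Conversely, the paper's singleton setting makes the conceptual point in the most stringent regime (every individual a community), while your two-community version shows the $\Theta(n)$ gap already appears with the smallest nontrivial number of groups.
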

\begin{proof}
	Consider the graph $G$ 
	 in Figure~\ref{fig:maximin-dp}
	 \begin{figure}[t] 
	 	\centering
	 	\scalebox{1}{
	 	    \begin{tikzpicture}
	     		\tikzset{vertex/.style = {shape=circle,draw = black,thick,fill = white, minimum size=.65cm}}
	     		\tikzset{edge/.style = {->,> = latex'}}
	     		\node[vertex] (v) at  (0,0) {$v$};
	     		\node[vertex] (u1) at  (-3,0.7) {$u_1$};    	
	     		\node[vertex] (u2) at  (-1.8,0.7) {$u_2$};
	     		\node[vertex] (un) at  (3,0.7) {$u_N$};
	     		\path (u2) -- (un) node [font=\small, midway, sloped] {$\dots$};
	     		\draw[edge, bend left=12] (v) to[below] node  {} (u1);
	     		\draw[edge, bend left=5] (v) to[above] node  {} (u2);
	     		\draw[edge, bend right=12] (v) to[below] node  {} (un);
	     	\end{tikzpicture}
	     } 
	 	\caption{Instance used in the proof of Lemma~\ref{lem: maxmin dempar} illustrating the contrast between the maximin critierion and demographic parity. The edge probabilities are set to $(1+\eps)/N$ and the IC model is used as diffusion model.} \label{fig:maximin-dp}
	 \end{figure}
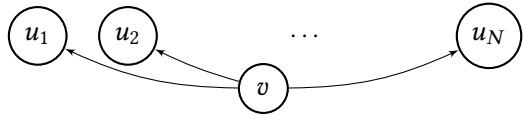
	consisting of $n=N+1$ nodes $v, u_1, \ldots, u_N$. Let $\CCC$ be the community structure consisting of all singleton communities, i.e. $\CCC = \{ \{v\},\{u_1\},\ldots,\{u_N\} \}$. There is an edge $(v, u_i)$, for each $i\in [N]$ with probability $(1+\eps)/N$. Furthermore, we assume that the IC model is used and set $k=1$. Note that by the choice of the edge probability, the optimal maximin strategy $q$ will assign probability 1 to the set $\{v\}$. This results in $\sigma(q)=2+\eps$ and $\sigma_{u_i}(q)=(1+\eps)/N$ for each $i\in[N]$. As $\sigma_v(q)=1$, this leads to a multiplicative violation in demographic parity of $N/(1+\eps) = \Theta(n)$. On the other hand, consider the probabilistic strategy $p\in\PPP$ that assigns $1/(N-\eps)$ to the set $\{v\}$ and $(1-1/(N-\eps))/N$ to each set $\{u_i\}$, for $i \in [N]$. It is clear that $\sigma_v(p)=1/(N-\eps)$ and furthermore 
	$
	\sigma_{u_i}(p)
	= (1-1/(N-\eps))/N 
	+ (1+\eps)/(N(N-\eps)) 
	$
	for $i\in[N]$, which equals $1/(N-\eps)$. Hence, the expected group coverage is identical for all groups. Furthermore, the overall spread is $(N+1)/(N-\eps)=1+\Theta(1/n)$, which is a lower bound on $\opt_\PPP(G,\CCC,k)$.
\end{proof}


\paragraph{Relationship between \IMDP, \PIMDP, and \iPIMDP.}
We first observe that clearly every feasible solution of \IMDP corresponds to a feasible solution of \iPIMDP and \PIMDP, respectively. Furthermore, every feasible solution of 
\iPIMDP directly corresponds to a feasible solution of \PIMDP via the following transformation: For $x\in \III$ define the vector $p^x$ as $p^x_S:=\prod_{i\in S} x_i \prod_{j\in V\setminus S} (1 - x_j)$, for $S\subseteq V$. Then, observe that $\sigma(x)=\sigma(p^x)$, $p^x\in \PPP$, and $\sigma_C(x)=\sigma_C(p^x)$, for any \(C\in \CCC\). Hence, we obtain the following lemma.
\begin{lemma} \label{lem: relation}
    For every instance $G,\CCC,k$, it holds that
    \[  
        \opt_\SSS(G, \CCC, k) \le \opt_\III(G, \CCC, k) \le \opt_\PPP(G, \CCC, k).
    \]
\end{lemma}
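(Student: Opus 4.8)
The plan is to prove the two inequalities in Lemma~\ref{lem: relation} separately, in both cases by exhibiting an explicit map from feasible solutions of the smaller problem to feasible solutions of the larger one that preserves both the objective value $\sigma$ and the demographic parity constraint; this suffices since the optimum of the larger problem is at least the objective value of any feasible point, in particular of the image of an optimal solution of the smaller problem.

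For the first inequality $\opt_\SSS(G,\CCC,k)\le\opt_\III(G,\CCC,k)$, I would take a feasible $S\in\SSS$ for \IMDP\ and map it to the indicator vector $x=\ones_S\in[0,1]^n$, i.e.\ $x_i=1$ if $i\in S$ and $x_i=0$ otherwise. Since $|S|\le k$ we have $\ones^Tx=|S|\le k$, so $x\in\III$. The sampling process $S'\sim x$ then deterministically returns $S$, so $\sigma(x)=\sigma(S)$ and $\sigma_C(x)=\sigma_C(S)$ for every $C\in\CCC$; in particular the constraint $\sigma_C(S)=\gamma$ for all $C$ transfers verbatim to $x$. Hence every optimal \IMDP\ solution yields a feasible \iPIMDP\ solution of the same value, giving the inequality. (The same indicator embedding, viewed as a point mass $p$ with $p_S=1$, also directly shows $\opt_\SSS\le\opt_\PPP$, but we do not need this as it follows by transitivity.)

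For the second inequality $\opt_\III(G,\CCC,k)\le\opt_\PPP(G,\CCC,k)$, the excerpt already supplies the transformation: given $x\in\III$, define $p^x_S:=\prod_{i\in S}x_i\prod_{j\in V\setminus S}(1-x_j)$ for $S\subseteq V$. I would verify the three claimed facts. First, $p^x\in\PPP$: the $p^x_S$ are nonnegative, sum to $1$ (expand $\prod_{i\in V}(x_i+(1-x_i))=1$), and $\sum_S p^x_S|S|=\E_{S\sim x}[|S|]=\sum_i x_i=\ones^Tx\le k$ by linearity of expectation. Second, $p^x$ is exactly the law of the set generated by the independent process $S\sim x$, so for any set function $f$ we get $f(p^x)=\E_{S\sim p^x}[f(S)]=\E_{S\sim x}[f(S)]=f(x)$ by definition of how $f$ is extended to distributions and to $\III$; applying this to $f=\sigma$ and to $f=\sigma_C$ (for each $C$) shows $\sigma(p^x)=\sigma(x)$ and $\sigma_C(p^x)=\sigma_C(x)$. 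Consequently the demographic parity constraint $\sigma_C(x)=\gamma$ for all $C$ carries over to $p^x$, so $p^x$ is feasible for \PIMDP\ with the same objective value, and the inequality follows.

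There is essentially no hard part here: the statement is a structural observation and the only thing to be careful about is checking that the extension conventions $f(p):=\E_{S\sim p}[f(S)]$ and $f(x):=\E_{S\sim x}[f(S)]$ are consistent with each other under the map $x\mapsto p^x$, which is immediate once one notes that $p^x$ is the pushforward of the independent-coin process. If anything merits a sentence of care, it is the budget constraint in the second step — confirming that an expected-size-$\le k$ independent distribution maps to an expected-size-$\le k$ distribution over sets — but this is just linearity of expectation. The whole lemma is therefore a two-line verification once the maps are written down.
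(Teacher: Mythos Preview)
Your proposal is correct and follows essentially the same approach as the paper: embed a feasible $S\in\SSS$ as its indicator vector in $\III$, and map $x\in\III$ to the product distribution $p^x$ in $\PPP$, checking that both maps preserve $\sigma$, $\sigma_C$, and the budget constraint. The paper states these observations more tersely (in the paragraph preceding the lemma) but the content is identical.
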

A natural question is then whether a similar relation holds also in the other direction. We observe that this is not the case, $\opt_{\III}(G, \CCC, k)$ cannot be upper bounded in terms of $\opt_{\SSS}(G, \CCC, k)$ multiplicatively and $\opt_{\PPP}(G, \CCC, k)$ not in terms of $\opt_{\III}(G, \CCC, k)$. Formally:
\begin{lemma}\label{lem:unbounded}
    Assume information spread to follow the IC model. There exist instances $G,\CCC,k$ s.t.\
    \[
        (i)\;  \frac{\opt_\SSS(G, \CCC, k)}{\opt_\III(G, \CCC, k)}=0, \text{ and } 
        (ii)\; \frac{\opt_\III(G, \CCC, k)}{\opt_\PPP(G, \CCC, k)}=0
    \]
    as well as $(iii)$ $\opt_\PPP(G, \CCC, k)-\opt_\III(G, \CCC, k)=\Omega(n)$.
\end{lemma}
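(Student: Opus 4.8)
The plan is to prove the three statements with separate (families of) instances, all using edges of weight exactly $1$, so that the diffusion is deterministic once the seed set is fixed and every spread quantity reduces to counting reachable vertices. The single feature exploited throughout is a rigidity of independent strategies: if a node $v$ has in-degree $0$, then $v$ is reached from $S\sim x$ precisely when $v\in S$, so $\sigma_v(x)=x_v$; a demographic-parity constraint touching such a node therefore fixes the corresponding coordinate of $x$, and because $x$ induces a product distribution this rigidity cannot be ``spent'' on correlations the way a general $p\in\PPP$ can.

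For $(i)$ I would take $G$ on two isolated nodes $a,b$, community structure $\CCC=\{\{a\},\{b\}\}$, and $k=1$. The only feasible deterministic seed sets are $\emptyset,\{a\},\{b\}$; the two singletons yield $(\sigma_a,\sigma_b)\in\{(1,0),(0,1)\}$ and hence violate parity, so $\opt_\SSS(G,\CCC,1)=\sigma(\emptyset)=0$. Yet $x=(\tfrac12,\tfrac12)\in\III$ satisfies $\sigma_a(x)=\sigma_b(x)=\tfrac12$ and has $\sigma(x)=1$, so $\opt_\III(G,\CCC,1)\ge 1$ and the ratio in $(i)$ is $0$. (If one prefers an instance with $n\to\infty$, take $r$ disjoint such pairs and $k=r$.)

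For $(ii)$ and $(iii)$ I would use one family $H_r$: take $r$ vertex-disjoint gadgets, gadget $j$ having in-degree-$0$ ``sources'' $s_1^{(j)},s_2^{(j)}$ and a ``target'' $t^{(j)}$ with weight-$1$ edges $(s_1^{(j)},t^{(j)}),(s_2^{(j)},t^{(j)})$; let $\CCC$ be all $3r$ singletons, $k=r$, $n=3r$. For $\opt_\III$: a feasible $x$ has a common value $\gamma$ with $x_{s_1^{(j)}}=\sigma_{s_1^{(j)}}(x)=\gamma=\sigma_{s_2^{(j)}}(x)=x_{s_2^{(j)}}$ and $\sigma_{t^{(j)}}(x)=1-(1-\gamma)^2(1-x_{t^{(j)}})=\gamma$ for all $j$. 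If $\gamma=1$ then all $2r$ sources have $x$-value $1$, contradicting $\ones^Tx\le k=r$; if $0<\gamma<1$ the target equation becomes $(1-\gamma)(1-x_{t^{(j)}})=1$, impossible since the left-hand side is strictly below $1$. Hence $\gamma=0$, which forces $x=0$, so $\opt_\III(H_r,\CCC,r)=0$. For $\opt_\PPP$: let $p$ be the product over gadgets of the per-gadget distribution that picks $\{s_1^{(j)},s_2^{(j)}\}$ with probability $\tfrac12$ and $\emptyset$ with probability $\tfrac12$; then $\ones^Tp=1$ and $\sum_S p_S|S|=\sum_{j=1}^r 1=r=k$, so $p\in\PPP$, and in every gadget $\sigma_{s_1^{(j)}}(p)=\sigma_{s_2^{(j)}}(p)=\sigma_{t^{(j)}}(p)=\tfrac12$ (the target is reached iff the gadget seeds both its sources). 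Thus $p$ is feasible with $\sigma(p)=\tfrac{3r}{2}=\tfrac n2$, so $\opt_\PPP(H_r,\CCC,r)\ge \tfrac n2$. Together these give $\opt_\III/\opt_\PPP=0$ (statement $(ii)$) and $\opt_\PPP-\opt_\III\ge \tfrac n2=\Omega(n)$ (statement $(iii)$).

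The only genuine design question — the part I would settle first — is how to build a gadget that at once (a) makes every independent strategy with positive common coverage infeasible while (b) admitting a cheap \emph{correlated} strategy with positive common coverage and $\Omega(n)$ total spread. The two-sources-one-target gadget does both: seeding one source already covers the target, so equalizing the two sources' coverages requires covering \emph{both} sources, which costs $2$ per gadget and hence is unaffordable deterministically or independently across all $r$ gadgets under the budget $k=r$; but a distribution may seed both sources of a gadget together with probability $\tfrac12$, keeping all coverages equal to $\tfrac12$ at expected cost $1$ per gadget. Beyond that, the only care needed is checking the corner case $\gamma=1$ (excluded by the budget) and confirming that $x=0$ is the \emph{unique} feasible independent strategy, so that $\opt_\III$ is exactly $0$ rather than merely small.
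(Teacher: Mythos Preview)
Your proof is correct and rests on the same core mechanism as the paper's: in-degree-$0$ nodes pin down their own $x$-coordinate under an independent strategy, and a target fed by two such sources is then over-covered, forcing $\gamma=0$.

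The constructions differ in minor ways. For $(i)$ you use two isolated nodes, which is actually simpler than the paper's two-node instance with an edge of weight $3/4$. For $(ii)$ and $(iii)$ the paper uses a \emph{single} gadget with two sources and $N$ targets (all sharing the same two sources) and budget $k=1$; you instead take $r$ disjoint gadgets, each with two sources and one target, and budget $k=r$. The paper's version is a bit more economical---one gadget already gives $\Omega(n)$ spread for the correlated strategy and the $\opt_\III=0$ argument needs only the bound $\rho\le 1/2$ from $k=1$---whereas your version trades that for a cleaner per-gadget analysis and a product distribution. Both are valid; neither buys anything the other cannot. One small remark: your parenthetical ``the target is reached iff the gadget seeds both its sources'' is correct \emph{for your specific distribution $p$} (since the only alternative is seeding nothing), but could be misread as a structural claim about the gadget; you might rephrase it to avoid confusion.
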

\begin{proof}
	In order to prove $(i)$, consider the graph on the left in Figure~\ref{fig:lemma-unbounded} 
	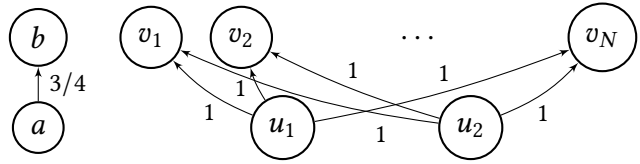
\begin{figure}[ht] 
		\centering
		\scalebox{1}{\begin{tikzpicture}\large
				\tikzset{vertex/.style = {shape=circle,draw = black,thick,fill = white, minimum size=.65cm}}
				\tikzset{edge/.style = {->,> = latex'}}
				
				\node[vertex] (a) at (-4.5, 0) {$a$};
				\node[vertex] (b) at (-4.5, 1.2) {$b$};
				
				\draw[edge] (a) to[right] node  {\small $3/4$} (b);
				
				\node[vertex] (u1) at  (-1.25,0) {$u_1$};
				\node[vertex] (u2) at  (1.25,0)  {$u_2$};
				\node[vertex] (v1) at  (-3,1.2) {$v_1$};
				\node[vertex] (v2) at  (-1.8,1.2) {$v_2$};
				\node[vertex] (vn) at  (3,1.2) {$v_N$};
				\path (v2) -- (vn) node [font=\large, midway, sloped] {$\dots$};
				
				\draw[edge, bend left=12] (u1) to[below] node  {\small $1$} (v1);
				\draw[edge, bend left= 8] (u1) to[left, pos=0.4] node  {\small $1$} (v2);
				\draw[edge, bend right=5] (u1) to[above] node  {\small $1$} (vn);
				\draw[edge, bend left= 8] (u2) to[below, pos=0.2] node  {\small $1$} (v1);
				\draw[edge, bend left=5] (u2) to[above] node  {\small $1$} (v2);
				\draw[edge, bend right=12] (u2) to[below] node  {\small $1$} (vn);
		\end{tikzpicture}} 
		\caption{Instance showing that the optimum of \PIMDP cannot be upper bounded in terms of \iPIMDP.} \label{fig:lemma-unbounded}
	\end{figure} 
	consisting of two nodes $a$ and $b$ that are connected by an edge with probability $3/4$. Let $\CCC$ be the singleton community structure and $k=1$. It is clear that a deterministic solution that chooses any seed cannot achieve demographic parity and thus $\opt_\SSS(G,\CCC,k)=0$. On the other hand, consider the solution $x\in \III$ for \iPIMDP defined by $x_a=2/3$ and $x_b=1/3$. It satisfies the demographic parity constraints, since $\sigma_a(x)=\sigma_b(x)=2/3$, and achieves an overall expected coverage $\sigma(x)$ of $2\cdot 2/3=4/3>0$ and thus $\opt_\III(G,\CCC,k)>0$.
	
	For$(ii)$ consider the graph $G$ in Figure~\ref{fig:lemma-unbounded} on the right
	consisting of two nodes $\{u_1, u_2\}$ and a set of $N$ nodes $I = \{v_1, \ldots, v_N\}$. For each node $u_i$, there is an edge to all nodes in $I$ with edge probability 1. Let $\CCC$ be the singleton community structure and $k=1$. We first observe that a feasible solution for \PIMDP is obtained by a distribution $p$ that selects the set $\{u_1, u_2\}$ and the empty set both with probability $1/2$, this solution $p$ achieves an expected spread $\sigma(p)$ of $N/2 + 1$, thus $\opt_{\PPP}(G, \CCC, k)\ge N/2 + 1 > 0$.
	Instead, we show that the only feasible solution $x\in \III$ for \iPIMDP is the zero solution, i.e., the solution $x^0$ with $x^0_v=0$ for all $v\in \{u_1, u_2, v_1, \ldots, v_n\}$ and thus $\opt_{\III}(G,\CCC, k)=0$.
	In order to show this, we first observe that $u_1$ and $u_2$ have no incoming edges and thus $\sigma_{u_j}(x)=x_{u_j}$ for any $x\in\III$ and $j\in[2]$. Moreover, due to the demographic parity constraints, we must have $x_{u_1} = x_{u_2}$. Let us call this value $\rho$ and observe that $\rho\le 1/2$ as $k=1$. Now assume for the purpose of contradiction that $\rho>0$. Then, for any $v\in I$, 
	$
	\sigma_{v}(x)
	= x_{v} + (1-x_{v}) (1- (1-\rho)^2)
	$   
	which is at least $1-(1-\rho)^2=\rho (2-\rho)>\rho$. 
	As $\rho = \sigma_{u_1}(x)$, this contradicts the demographic parity constraints and thus $\rho=0$. As a consequence $x_{v_i}=0$ for all $i\in [N]$ due to the demographic parity constraints and thus $\opt_\III(G, \CCC, k)= 0$. This shows $(ii)$. Finally for $(iii)$, $\opt_\PPP(G, \CCC, k)-\opt_\III(G, \CCC, k)\ge N/2 + 1=\Omega(n)$.
\end{proof}

\paragraph{Price of Fairness.}
The price of (group) fairness is a measure of loss in efficiency due to fairness. More precisely, for $X\in\{\SSS, \III, \PPP\}$, we define $\pof_{X}(G, \CCC, k)$ as the ratio of the maximum coverage in the absence of fairness constraints, i.e., $\opt(G, k)$ to the optima of the corresponding problem involving demographic parity fairness constraints, in other words,
$\pof_{X}(G, \CCC, k):= \opt(G, k)/\opt_X(G, \CCC, k)$.
Due to Lemma~\ref{lem: relation}, we have the following relation
\(
    \pof_\SSS(G, \CCC,k) 
    \ge \pof_\III(G, \CCC,k)
    \ge \pof_\PPP(G, \CCC,k).
\)
We proceed by showing that the $\pof$ can be unbounded for \PIMDP and thus in all three cases.
\begin{lemma}\label{lem: pof lower}
    Assume that information spread follows the IC model.
    For any even $n>0$, there is an instance $G,\CCC,k$ s.t.\ $\pof_X(G, \CCC, k)=\Omega(n)$ for $X\in\{\SSS, \III, \PPP\}$.
\end{lemma}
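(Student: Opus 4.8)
The plan is to reduce everything to the most permissive of the three problems. By Lemma~\ref{lem: relation} we have $\pof_\SSS(G,\CCC,k) \ge \pof_\III(G,\CCC,k) \ge \pof_\PPP(G,\CCC,k)$, so it suffices to construct a single instance on which $\pof_\PPP(G,\CCC,k) = \Omega(n)$, i.e.\ an instance where unconstrained influence maximization can reach $\Omega(n)$ nodes while \emph{every} distribution over seed sets satisfying demographic parity has expected spread only $O(1)$.

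The instance I would use fixes $k=1$ and, for an even $n$, partitions the node set into two communities $C_1,C_2$ with $|C_1|=|C_2|=n/2$, taking $\CCC=\{C_1,C_2\}$. I single out one node $w\in C_2$ and, for every $v\in C_2\setminus\{w\}$, add an edge $(w,v)$ of probability $1$; there are no other edges and information diffuses according to the IC model. Thus the nodes of $C_1$ are isolated (no incoming edges, hence reached only if seeded), while seeding the hub $w$ activates all of $C_2$.

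The argument then has three short steps. First, the unconstrained optimum is large: seeding $w$ reaches all $n/2$ nodes of $C_2$, so $\opt(G,1)\ge n/2$ (in fact with equality, since any other single seed reaches only itself). Second, I bound the constrained optimum: fix a feasible $p\in\PPP$ with common coverage $\gamma$, i.e.\ $\sigma_{C_1}(p)=\sigma_{C_2}(p)=\gamma$. Since each $v\in C_1$ is reached exactly when it is seeded, $\gamma = \sigma_{C_1}(p) = \tfrac{1}{|C_1|}\sum_{v\in C_1}\Pr_{S\sim p}[v\in S] = \tfrac{1}{|C_1|}\E_{S\sim p}[|S\cap C_1|] \le \tfrac{k}{|C_1|} = \tfrac{2}{n}$, using $\E_{S\sim p}[|S|]\le k=1$; hence $\sigma(p) = \sum_{v\in V}\sigma_v(p) = |C_1|\,\sigma_{C_1}(p) + |C_2|\,\sigma_{C_2}(p) = n\gamma \le 2$, so $\opt_\PPP(G,\CCC,1)\le 2$. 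Third, to guarantee $\opt_\PPP>0$ (so the ratio is well defined) I exhibit a feasible solution of value $\Omega(1)$: seed each node of $C_1$ and also $w$, independently, each with probability $t:=2/(n+2)$; this yields $\sigma_{C_1}(p)=\sigma_{C_2}(p)=t$, respects the expected-budget constraint, and has spread $nt = 2n/(n+2) \ge 1$ for $n\ge 2$. Combining, $\pof_\PPP(G,\CCC,1) = \opt(G,1)/\opt_\PPP(G,\CCC,1) \ge (n/2)/2 = n/4 = \Omega(n)$, and Lemma~\ref{lem: relation} lifts this to $\pof_\III$ and $\pof_\SSS$.

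The conceptual obstacle — and the reason a fresh instance is needed rather than reusing those behind Lemma~\ref{lem:unbounded} — is that those examples only cripple $\SSS$ or $\III$ while $\PPP$ stays powerful; here one must force even the fully general probabilistic relaxation to be inefficient. The mechanism that achieves this is to make the ``hard'' community $C_1$ itself of size $\Theta(n)$, each of whose nodes demands its own seed, so that a unit budget caps its average coverage at $O(1/n)$; demographic parity then propagates this cap to the cheaply floodable community $C_2$, throttling the total spread to $O(1)$. Beyond this idea the proof is routine: the only care needed is accounting for the $C_1$- and $C_2$-contributions to $\sigma(p)$ and checking that the instance admits a feasible solution of positive value.
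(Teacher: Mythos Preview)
Your proof is correct and uses essentially the same mechanism as the paper: a linear-size set of nodes with no incoming edges caps the common coverage at $O(1/n)$ via the budget constraint, and demographic parity then throttles the total spread to $O(1)$ while a single hub can reach $\Theta(n)$ nodes unconstrained. The paper's instance is cosmetically different---it takes the singleton community structure and places the hub $w$ in the isolated half, with edges from $w$ into the \emph{other} half---but the argument is line-for-line the same (bound $\sum_v \Pr[v\in S]$ over the isolated nodes by $k$, deduce $\rho\le 2/n$, conclude $\sigma(p)\le 2$). One small addition you make that the paper omits is the explicit feasible witness showing $\opt_\PPP>0$, so that the ratio is well defined.
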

\begin{proof}
	In the light of the comment above it suffices to show the claim for $\pof_\PPP$.
	Consider the graph $G$ consisting of two disjoint sets $I$ and $J$, each of size $n/2$. For one specific node $w\in J$, there is an edge from $w$ to each node in $I$ with probability 1. Let $\CCC$ be the singleton community structure and $k=1$. 
	Let us call $p$ an optimal solution $p$ for \PIMDP. Since nodes in $J$ have no incoming edges, it holds that $\sigma_v(p)=\Pr_{S\sim p}[v\in S]$ for all $v\in J$. Let us call this value $\rho$. By the fairness constraints, it holds that $\sigma_v(p)=\rho$ for the nodes $v\in I$. As a result $\sigma(p)=n\rho$. Furthermore, 
	\[
	\frac{n}{2}\cdot \rho 
	= \sum_{v\in J}\Pr_{S\sim p}[v\in S]
	= \sum_{v\in J}\sum_{S:v\in S} p_S
	= \sum_{S\subseteq V} p_S|S| 
	\le 1,
	\]
	where the inequality holds because $p\in\PPP$. Hence, $\rho \le 2/n$ and $\opt_\PPP(G, \CCC, k)=n \rho\le 2$.
	On the other hand, $\opt(G, k)\ge \sigma(\{w\})= n/2 + 1$ and thus
	$\pof_\PPP(G, \CCC, k) \ge (n/2+1)/2 =\Omega(n)$.
\end{proof}

\section{Hardness Results}
\label{sec: hoa}
In this section, we give several hardness and hardness of approximation results for \IMDP, \PIMDP, and \iPIMDP.

\subsection{Hardness of \texorpdfstring{\IMDP}{IMDP}}
We first show that it is $\NP$-hard to approximate \IMDP to within any bounded factor. Indeed, we prove two stronger and more general statements: One cannot find in polynomial time a solution that approximates the optimum of \IMDP, even if we allow the fairness constraints to be violated by a multiplicative or an additive term, unless $\PP=\NP$. We start with the multiplicative case.
\begin{theorem}\label{alpha-beta-dp}
	For any $\alpha\in(0,1]$, $\beta\in(0,1]$, there is no $(\alpha,\beta)$-approximation algorithm for \IMDP, unless $\PP=\NP$.
\end{theorem}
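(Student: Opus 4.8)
The plan is to reduce from \textsc{Vertex Cover} restricted to graphs without isolated vertices, which is still $\NP$-hard. Given such a graph $H=(V_H,E_H)$ and an integer $k\ge 1$, I will build an instance $G,\CCC,k$ of \IMDP in which the existence of \emph{any} $\beta$-feasible solution with positive spread is equivalent to $H$ having a vertex cover of size at most $k$. Since an $(\alpha,\beta)$-approximation algorithm must return a $\beta$-feasible solution, and must return one of spread at least $\alpha\cdot\opt_\SSS(G,\CCC,k)>0$ whenever $\opt_\SSS(G,\CCC,k)>0$, running it on $G,\CCC,k$ and checking whether the returned set has positive spread would decide \textsc{Vertex Cover} in polynomial time; this contradicts $\PP\neq\NP$. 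Note the reduction does not depend on $\alpha$ or $\beta$ at all.

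\textbf{Construction.} For every $u\in V_H$ introduce a node $x_u$, and for every edge $e=\{u,v\}\in E_H$ introduce a node $y_e$ together with two edges $x_u\to y_e$ and $x_v\to y_e$, each of weight $1$; the diffusion model is IC, so all these edges are deterministically live and $\sigma(S)$ equals the number of nodes reachable from $S$ in $G$. The community structure $\CCC$ consists of the singletons $\{y_e\}$ for $e\in E_H$ (so the nodes $x_u$ lie in no community), and the budget is $k$. Since the $x_u$ have no incoming edges, $\sigma_{\{y_e\}}(S)\in\{0,1\}$, and it equals $1$ exactly when $y_e\in S$ or $x_u\in S$ or $x_v\in S$.

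\textbf{The two directions.} If $H$ has a vertex cover $W$ with $|W|\le k$, then $S=\{x_u:u\in W\}$ reaches every $y_e$, hence $\sigma_{\{y_e\}}(S)=1$ for all $e$ — perfect demographic parity — and $\sigma(S)=|W|+|E_H|>0$ (using $E_H\neq\emptyset$, so $W\neq\emptyset$), so $\opt_\SSS(G,\CCC,k)>0$. Conversely, suppose $H$ has no vertex cover of size $\le k$, and let $S$ with $|S|\le k$ be any $\beta$-feasible solution. Since $\beta>0$, the inequalities $\sigma_{C_i}(S)\ge\beta\,\sigma_{C_j}(S)$ force the coverages of all communities to be either all positive or all zero. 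In the first case they all equal $1$, so every $y_e$ is reached; replacing each $y_e\in S$ by one of its endpoints' $x$-node yields a set $S'=\{x_u:u\in W'\}$ with $|W'|\le|S|\le k$ such that $W'$ covers every edge of $H$ (for any edge $f=\{a,b\}$, $S$ reaching $y_f$ means $y_f\in S$, $x_a\in S$, or $x_b\in S$, and in each case $W'$ contains an endpoint of $f$) — a vertex cover of size $\le k$, a contradiction. Hence all community coverages are $0$, no $y_e$ is reached, so $S$ contains no $y_e$ and, because every $x_u$ has an incident edge and thus reaches some $y_e$, no $x_u$ either; therefore $S=\emptyset$ and $\sigma(S)=0$. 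So on a yes-instance the algorithm outputs spread $\ge\alpha\,\opt_\SSS(G,\CCC,k)>0$, and on a no-instance every $\beta$-feasible solution — in particular the algorithm's output — has spread $0$; the two cases are told apart by evaluating $\sigma$ on the output, which is polynomial since $G$ is deterministic.

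\textbf{Main obstacle.} The delicate part is the no-instance analysis: one must preclude a solution that accumulates spread ``for free'' without touching the communities, which is exactly what the no-isolated-vertex restriction (every $x_u$ hits a community) plus the use of singleton communities prevents; and one must absorb the multiplicative slack $\beta$, for which the only point needed is that for any $\beta\in(0,1]$ a $\beta$-feasible solution cannot have one community at value $0$ and another strictly positive. The remaining ingredients (computing $\sigma$, padding $W$ to satisfy $|S|\le k$, polynomiality of the reduction) are routine.
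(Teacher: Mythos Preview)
Your proof is correct, and it takes a genuinely different route from the paper's. The paper reduces from \textsc{Set Cover} using two large communities whose sizes are carefully tuned to the parameter $\beta$ (via an auxiliary integer $\lambda=1/\sqrt{\beta'}$, a clique of size $\lambda(\kappa+\nu)$, and a padding set of isolated nodes); the key computation there shows that the two group coverages fall on either side of the $\beta$-feasibility threshold precisely when the underlying \textsc{Set Cover} instance is a yes-instance. Your approach instead reduces from \textsc{Vertex Cover} (without isolated vertices), places singleton communities on the edge-gadget nodes, and exploits that with deterministic IC edges each community coverage is $\{0,1\}$-valued, so for any $\beta>0$ the $\beta$-feasibility constraint collapses to ``all or nothing''. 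This makes your construction independent of both $\alpha$ and $\beta$, which is cleaner. What the paper's approach buys in exchange is that it uses only two communities that partition $V$ (rather than many singleton communities plus uncovered nodes), and its gadget is reused almost verbatim for the additive-violation hardness result that follows; your $0$/$1$ trick would not transfer to that setting without further work.
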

\begin{proof}
	Let $\beta'$ be the largest $\beta'\leq\beta$ such that $1/\sqrt{\beta'}$ is integer. We show the stronger statement for $\beta'$ instead of $\beta$.
	We reduce from \textsc{Set Cover}, where we are given a ground set $U =\{U_1, \ldots, U_\nu\}$, a collection of subsets $D = \{D_1, \ldots, D_\mu\} $ over $ U $, and an integer $\kappa$, and we aim to determine whether there exists a subset $D' \subseteq D$ of size $\kappa$ whose union is $U$.	
	Given an instance of \textsc{Set Cover}, we define an instance of \IMDP. W.l.o.g.\ we can assume the instance to be large enough, that is $\mu > 1/\sqrt{\beta'}$. Furthermore, we assume that information spread follows the IC model. The graph $G=(V,E,w)$ in the \IMDP instance is constructed as illustrated in Figure~\ref{fig-alpha-beta-dp}. 
	\begin{figure}[t] 
		\centering
		\scalebox{1}{\begin{tikzpicture}
				\tikzset{vertex/.style = {shape=circle,draw = black,thick,fill = white, minimum size=.65cm}}
				\tikzset{edge/.style = {->,> = latex'}}
				\node[vertex] (m1) at  (-3.5,0) {$v_1$};
				\node[vertex] (m2) at  (-3.5,-1){$v_2$};
				\node[vertex] (mm) at  (-3.5,-2.5){$v_\mu$};
				\path (m2) -- (mm) node [midway, sloped] {$\dots$};
				\node[vertex] (u1) at  (-2,0) {$u_1$};
				\node[vertex] (u2) at  (-2,-1){$u_2$};
				\node[vertex] (un) at  (-2,-2.5){$u_\nu$};
				\path (u2) -- (un) node [midway, sloped] {$\dots$};
				\draw[edge] (m1) to[above] node  { 1} (u1);
				\draw[edge] (m1) to[right, pos=0.15] node  { 1} (un);
				\draw[edge] (m2) to[above] node  {1} (u2);
				\draw[edge] (mm) to[above, pos=0.3] node  {1} (u2);
				\path [draw = black, rounded corners, inner sep=100pt, dotted]
				(-4.5, 0.5) -- (0.5, 0.5) -- (0.5, -3) -- (-4.5, -3) -- cycle;
				\node  (empty)    at (-4.1, -2.75)  {\large $C_1$};
				\path [draw = black, rounded corners, inner sep=100pt]
				(-1.25, 0.3) -- (0.25, 0.3) -- (0.25, -2.75) -- (-1.25, -2.75) -- cycle;
				\node (empty) at  (-.75,-1.5) {$Y$};
				\node (empty) at  (-.5,-1.9) {set of $N$};
				\node (empty) at  (-.5,-2.2) {isolated};
				\node (empty) at  (-.5,-2.5) {nodes};
				
				\path [draw = black, rounded corners, inner sep=100pt]
				(1.25, 0.25) -- (2.75, 0.25) -- (2.75, -2.75) -- (1.25, -2.75) -- cycle;
				\node (empty) at  (1.75,-1.8) {$X$};
				\node (empty) at  (1.95,-2.2) {clique of};
				\node (empty) at  (1.95,-2.5) {$L$ nodes};
				\node[vertex] (x) at  (2.25,-0.25){$x$};
				
				\node[vertex] (x'1) at  (4,0) {$z_1$};
				\node[vertex] (x'2) at  (4,-1){$z_2$};
				\node[vertex] (x'm-k)at (4,-2.5) {$z_M$};
				\path (x'2) -- (x'm-k) node [midway, sloped] {$\dots$};
				\path [draw = black, rounded corners, inner sep=100pt, dotted]
				(1, 0.5) -- (5, 0.5) -- (5, -3) -- (1, -3) -- cycle;
				\node  (empty)    at (4.6, -2.75)  {\large $C_2$};
				
				\draw[edge] (x'1) to[above] node   {1} (x);
				\draw[edge] (x'2) to[above] node   {1} (x);
				\draw[edge] (x'm-k) to[above] node {1} (x);
		\end{tikzpicture}} 
		\caption{Construction of $G$ from a \textsc{Set Cover} instance.}\label{fig-alpha-beta-dp}
	\end{figure}
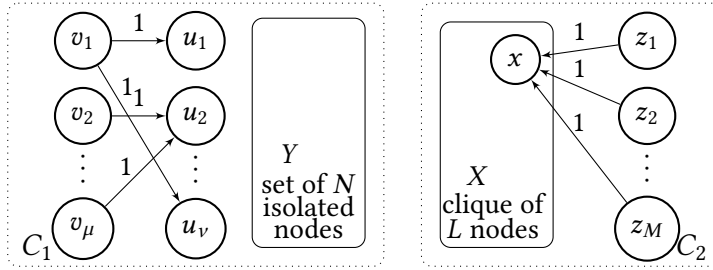
	We define an integer $\lambda:=1/\sqrt{\beta'}$ that depends on $\beta'$ and influences the size of $G$. 
	The node set $V$ consists of two disjoint and disconnected communities $\CCC=\{C_1,C_2\}$. The first community $C_1$ consists of (1) one node $v_{j}$ for each $D_j \in D$, (2) one node $u_i$ for each $U_i \in U$, and (3) a set of $N = (\mu \cdot\lambda -1) \cdot (\mu + \nu)$ (isolated) nodes $Y$.
	The only edges in $C_1$ are those defined by the \textsc{Set Cover} instance, i.e., there is an edge from $v_j$ to $u_i$, whenever $U_i \in D_j$. The second community $C_2$ consists of (1) a bidirected clique $X$ of $L=\lambda \cdot(\kappa + \nu)$ nodes, and (2) a set $Z$ of $M=\mu \cdot (\mu+\nu)- \lambda \cdot(\kappa+\nu)$ nodes. Besides, the edges in $X$, there is one edge from each node $z\in Z$ to one specific node $x\in X$. The edge probabilities of all edges are 1.
	We set $k:=\kappa+1$ and note that $M= \mu \cdot (\mu+\nu)- 1/\sqrt{\beta'} \cdot(\kappa+\nu)>0$ by the definition of $\lambda$ and the assumptions that $\mu > 1/\sqrt{\beta'}$.
	
	We now show that there exists a set cover $D'$ of size $\kappa$ if and only if there is a $\beta'$-feasible solution with strictly positive spread. For brevity, let us denote $B:=(\kappa+\nu)/ (\mu (\mu+\nu))$. (i) First assume that there is a set cover $D'$ of size $\kappa$. Setting $S$ to be the set of nodes corresponding to $D'$ plus the node $x$ achieves a spread of $\sigma(S) = (\kappa+\nu)(\lambda +1)>0$. To verify that $S$ is $\beta'$-feasible we observe that $\sigma_{C_1}(S) = B/\lambda$ and $\sigma_{C_2}(S) =\lambda B$ and thus $\sigma_{C_2}(S)\ge \sigma_{C_1}(S) \ge \beta' \sigma_{C_2}(S)$.
	(ii) We now show the opposite direction: If there is a $\beta'$-feasible seed set $S$ that has positive spread, it has to hold that $|S|\geq 1$. Then, by the fairness constraints and the fact that the communities are disconnected, the set $S$ has to contain at least one node from $C_2$. This implies that $\sigma_{C_2} (S)\geq \lambda B$. By the $\beta'$-feasibility, we have that $\sigma_{C_1}(S)\ge \beta' \sigma_{C_2}(S)\ge \beta' \lambda B = B/\lambda$. This implies that there is a set of size at most $k-1=\kappa$ that covers at least $\kappa+\nu$ nodes in community $C_1$ and thus there is a set cover of size at most $\kappa$.
	
	Now, assume that there exists a polynomial-time $(\alpha,\beta')$-approximation algorithm $A$ for \IMDP. Then, if there exists a set cover of size $\kappa$, $A$ will output a solution $S$ such that $\sigma(S)\geq \alpha \cdot \opt_\SSS^{\demp}(G,\CCC, k) > 0$. Otherwise, $A$ must return the only $\beta'$-feasible seed set $S = \emptyset$ with $\sigma(S) = 0$. Therefore, by using $A$ we can decide in polynomial time whether or not there exists a set cover of size $\kappa$, and so no such algorithm can exist unless $\PP=\NP$.
\end{proof}

We now turn to the additive case. For a given $\eps\in[0,1)$, we say that a seed set $S$ is $\eps^+$-feasible if $\big|\sigma_{C_i}(S)-\sigma_{C_j}(S)\big| \le \eps$ for all $C_i, C_j \in \CCC$, $i \neq j$. For $\alpha\in(0,1]$ and $\eps\in[0,1)$, an $(\alpha, \eps)^+$-approximation algorithm for \IMDP produces an $\eps^+$-feasible seed set $S$ such that $\sigma(S)\geq \alpha\opt$. Using a similar reduction we show the following theorem.
\begin{theorem}\label{thm: alpha-epsilon-additive}
	For $\alpha\in(0,1]$, $\eps\in[0,1)$, there is no $(\alpha, \eps)^+$-approximation algorithm for \IMDP, unless $\PP=\NP$.
\end{theorem}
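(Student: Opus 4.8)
The plan is to reuse the reduction from \textsc{Set Cover} behind Theorem~\ref{alpha-beta-dp}, recalibrating the construction so that the two community coverages of the ``cover solution'' differ by exactly $\epsilon$ additively rather than by a factor $\beta'$ multiplicatively. Given a \textsc{Set Cover} instance $(U,D,\kappa)$ with $|U|=\nu$, $|D|=\mu$ (w.l.o.g.\ $\kappa<\nu$ and every element lies in a set), I would build a graph $G$ under the IC model with all edge probabilities $1$ (so spread equals a reachability count) and two disconnected communities: $C_1$ containing a node $v_j$ per set $D_j$, a node $u_i$ per element $U_i$ (with edge $v_j\to u_i$ whenever $U_i\in D_j$), and a set $Y$ of isolated nodes; and $C_2$ containing a bidirected clique $X$ of $L$ nodes together with a set $Z$ of nodes each having a single edge into a fixed node $x\in X$. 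Set $k:=\kappa+1$. The sizes $|C_1|,|C_2|,L$ (which fix $|Y|=|C_1|-\mu-\nu$ and $|Z|=|C_2|-L$) are chosen so that the solution $S^\star$ consisting of the $\kappa$ set-nodes of a cover together with $x$ has $\sigma_{C_1}(S^\star)=(\kappa+\nu)/|C_1|$ and $\sigma_{C_2}(S^\star)=L/|C_2|$ with $L/|C_2|=(\kappa+\nu)/|C_1|+\epsilon$; since $\epsilon<1$ this is solvable over the positive integers once $|C_1|$ is large enough (e.g.\ $|C_1|\ge\max\{\mu+\nu,\lceil(\kappa+\nu)/(1-\epsilon)\rceil\}$), and for irrational $\epsilon$ one uses throughout a rational $\epsilon'$ with $\epsilon-1/|C_1|<\epsilon'\le\epsilon$.

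The ``yes'' direction is immediate: $S^\star$ is $\epsilon^+$-feasible (coverage gap $\epsilon$, resp.\ $\epsilon'\le\epsilon$) and has spread $\kappa+\nu+L$, so $\opt\ge\kappa+\nu+L$. The ``no'' direction mirrors Theorem~\ref{alpha-beta-dp}. Suppose $S$ is $\epsilon^+$-feasible with positive spread and $S$ meets $C_2$. Since $X$ is a clique and every $Z$-node points into it, all of $X$ is reached, so $\sigma_{C_2}(S)\ge L/|C_2|=(\kappa+\nu)/|C_1|+\epsilon$; $\epsilon^+$-feasibility then gives $\sigma_{C_1}(S)\ge\sigma_{C_2}(S)-\epsilon\ge(\kappa+\nu)/|C_1|$, i.e.\ $S$ reaches at least $\kappa+\nu$ nodes of $C_1$. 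But $|S|\le\kappa+1$ and at least one seed lies in $C_2$, so at most $\kappa$ seeds lie in $C_1$; as the only $C_1$-nodes reachable from $C_1$-seeds are the $\le\mu+\nu$ set/element nodes plus the seeded isolated nodes, reaching $\kappa+\nu$ of them with $\le\kappa$ seeds forces those seeds to be set-nodes whose sets cover $U$, i.e.\ a \textsc{Set Cover} of size $\le\kappa$, a contradiction. Hence in a ``no'' instance every $\epsilon^+$-feasible $S$ avoids $C_2$, whence $\sigma_{C_2}(S)=0$, $\sigma_{C_1}(S)\le\epsilon$, and $\sigma(S)\le\mu+\nu+k=\mu+\nu+\kappa+1$.

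Finally I would pick $L$ (still free) large enough that $\alpha(\kappa+\nu+L)>\mu+\nu+\kappa+1$, which is compatible with all of the above size and integrality requirements. Then any $(\alpha,\epsilon)^+$-approximation algorithm run on $G$ returns an $\epsilon^+$-feasible set of spread $\ge\alpha\,\opt\ge\alpha(\kappa+\nu+L)$ in the ``yes'' case and of spread $\le\mu+\nu+\kappa+1<\alpha(\kappa+\nu+L)$ in the ``no'' case, so comparing the spread of its output decides \textsc{Set Cover}; no such algorithm can exist unless $\PP=\NP$.

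The step I expect to be the crux is the calibration. In the multiplicative proof, $\beta'$-feasibility of a $C_2$-touching solution forced $\sigma_{C_1}$ up to a value whose pre-image under multiplication by $|C_1|$ was exactly $\kappa+\nu$, i.e.\ (number of $C_1$-seeds available)$\,+\,\nu$, and this forced a full cover. In the additive world a near-cover (reaching $\nu-1$ elements) differs from a full cover by only $1/|C_1|$ in coverage, so it is essential that feasibility force reach $\ge\kappa+\nu$ rather than merely reach $\ge(1-\epsilon)\nu$; this is precisely what the exact identity $\sigma_{C_2}(S^\star)-\sigma_{C_1}(S^\star)=\epsilon$ together with $\sigma_{C_1}(S^\star)=(\kappa+\nu)/|C_1|$ delivers, and it is also why the budget must stay at $k=\kappa+1$ (so that the isolated $Y$-nodes cannot be seeded in bulk to pad the $C_1$-reach up to $\kappa+\nu$ and fake a cover). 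Getting all sizes integral while keeping $L$ polynomially bounded, and handling irrational $\epsilon$ via a sufficiently fine rational $\epsilon'$, is routine but is where the bookkeeping lives.
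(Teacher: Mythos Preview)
Your reduction is sound and proves the theorem, but it takes a different route from the paper's. The paper modifies the multiplicative construction in one essential way: it adds an edge of probability $1$ from \emph{every} node of $C_1$ to the distinguished clique node $x\in X\subseteq C_2$, and sets $k=\kappa$ rather than $\kappa+1$. The effect is that \emph{any} nonempty seed set automatically reaches all of $X$, so your case ``$S$ avoids $C_2$'' simply cannot occur; in a ``no'' instance the empty set is the \emph{only} $\eps^+$-feasible set, and one distinguishes yes/no by whether the algorithm's output has positive spread. This eliminates the large-$L$ threshold argument and the attendant calibration of $L$ against $\mu,\nu,\kappa,\alpha$. Your version keeps the two communities disconnected (exactly as in Theorem~\ref{alpha-beta-dp}), which forces the two-case split and the quantitative gap; it works, but the bookkeeping---integrality of $|C_1|,|C_2|,L$ subject to $L/|C_2|=(\kappa+\nu)/|C_1|+\eps'$ while simultaneously making $L$ large and polynomial---is heavier than what the paper needs. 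What your approach buys is that it is a more direct reuse of the Theorem~\ref{alpha-beta-dp} gadget; what the paper's extra edges buy is that the ``no'' case collapses to $\sigma(S)=0$ with no threshold to tune.

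One remark that applies equally to your write-up and to the paper's own proof: the inference ``$S^\star$ is $\eps^+$-feasible with spread $\kappa+\nu+L$, so $\opt\ge\kappa+\nu+L$'' tacitly reads $\opt$ as the optimum over $\eps^+$-feasible sets rather than over strictly feasible ones. If you want to be precise, say so explicitly.
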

\begin{proof}
	The proof is based on a reduction from the \textsc{Set Cover} problem similar to the one used in Theorem~\ref{alpha-beta-dp}.
	Let $\eps'$ be the smallest value such that $\eps < \eps' < 1$ and  $\eps'\cdot(\mu^2 + \nu )$ is integer. We prove the stronger statement for $\eps'$ instead of $\eps$.
	W.l.o.g. we assume that $\mu > 4/(1-\eps')$ and that $\nu\geq \kappa$. Moreover, we assume that $\mu\ge \nu/2$ since \textsc{Set Cover} remains $\NP$-hard in this case (see, e.g., \cite[Theorem~3.3]{GareyJ79}). We assume the IC model as underlying diffusion model. Consider the graph $G=(V, E, w)$ in Figure~\ref{fig-alpha-beta-dp}, where $N=\mu \cdot(\mu-1)$, $L=|X|=\eps'\cdot(\mu^2+\nu) + \nu +k$, and $M=(1-\eps')\cdot(\mu^2+\nu)-\nu-k$. In addition, for every node $v \in C_1$, there is an edge from $v$ to the specific node $x$ in $X$ with probability one.  We also set $k =\kappa$. Note that $M= (1-\eps')\cdot (\mu^2+\nu) - \nu -k > 0$ by the assumptions that $\mu > 4/(1-\eps')$ and $\mu\ge \nu/2$. In fact, $ (1-\eps')\cdot (\mu^2+\nu)\geq (1-\eps')\cdot \mu^2 > (1-\eps')\cdot \frac{4\mu}{1-\eps'}> 2\nu > \nu+k$
	
	We show that there exists a set cover $D'$ of size $\kappa$ if and only if there exists an $(\eps')^+$-feasible solution $S$ such that $\sigma(S)>0$. For brevity, let $B:=(k+\nu)/(\mu^2+\nu)$.
	(i) If there exists a set cover $D'$ of size $\kappa = k$. Then, we can construct an $(\eps')^+$-feasible seed set $S$ of size $k$ by selecting the nodes corresponding to the subsets in $D'$ and obtain $\sigma(S)= \eps'\cdot(\mu^2+\nu)+2\cdot(\nu+k)>0$. The set $S$ is $(\eps')^+$-feasible since $\sigma_{C_1}(S)=B$ and $\sigma_{C_2}(S)=(\eps'\cdot(\mu^2+\nu)+\nu+k)/(\mu^2+\nu)=\sigma_{C_1}(S)+\eps'$. 
	(ii) If there exists an $(\eps')^+$-feasible seed set $S$ such that  $\sigma(S)>0$, then we must have that $|S|>1$. Since all the nodes in $G$ reach the node $x\in X$ with probability 1 and from node $x$ all nodes in $X$ are reached with probability 1, we have that $\sigma_{C_2}(S) \geq B + \eps'$. By the $(\eps')^+$-feasibility of $S$, this bound on $\sigma_{C_2}(S)$ implies that $\sigma_{C_1}(S) \geq B$. Hence, there exists a set of seed nodes of size at most $k=\kappa$ in community $C_1$ that reaches at least $k+\nu$ nodes, thus there is a set cover of size at most $\kappa$.
	
	
	Let us assume that there exists polynomial-time $(\alpha, \eps')^+$-approximation algorithm $A$ for \IMDP. If there exists a set cover of size $\kappa$, than $A$ outputs an $(\eps')^+$-feasible set $S$ such that $\sigma(S)>0$. Otherwise, $A$ outputs $S = \emptyset$ with $\sigma(S) = 0$. Hence, $A$ can be used to solve the set cover problem in polynomial time, a contradiction to $\PP\neq \NP$.
\end{proof}

\subsection{Hardness of \texorpdfstring{\PIMDP}{PIMDP} and \texorpdfstring{\iPIMDP}{iPIMDP}} 
For the \PIMDP problem we prove the following theorem, again via a reduction from \textsc{Set Cover}.
\begin{theorem}\label{thm: pimdp hard}
    The \PIMDP problem is $\NP$-hard.
\end{theorem}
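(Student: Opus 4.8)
The plan is to reduce from \textsc{Set Cover}. Given a ground set $U=\{U_1,\dots,U_\nu\}$, a family $D=\{D_1,\dots,D_\mu\}$, and an integer $\kappa$ (we may assume $\nu,\kappa\ge 1$ and $\mu>\kappa$, since otherwise the instance is trivial), I would build a graph $G$ whose vertex set splits into two \emph{mutually disconnected} communities $\CCC=\{C_1,C_2\}$, with all edge probabilities equal to $1$ under the IC model. Community $C_1$ contains a node $v_j$ for each $D_j$, a node $u_i$ for each $U_i$, an edge $v_j\to u_i$ whenever $U_i\in D_j$, and additionally $N:=(\mu+\nu)(\kappa+\nu-1)$ isolated padding nodes, so that $|C_1|=(\mu+\nu)(\kappa+\nu)$. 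Community $C_2$ consists of $\mu+\nu$ isolated nodes. I set $k:=\kappa+1$ and fix the threshold $\tau:=\kappa+\nu+1$. The construction has polynomial size; since all weights are $1$, the quantities $\sigma(\cdot)$ and $\sigma_C(\cdot)$ are ordinary reachability counts and thus polynomial-time computable, so the reduction is legitimate. The statement I would prove is: a set cover of size $\kappa$ exists if and only if $\opt_\PPP(G,\CCC,k)\ge\tau$; since the ``no'' side will in fact give $\opt_\PPP(G,\CCC,k)\le\tau-\tfrac1{\kappa+2}$, this is a $1/\poly$ gap and hence shows $\NP$-hardness of deciding, a fortiori of computing, $\opt_\PPP$.

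The completeness direction is immediate: given a set cover $D'$, padded to size exactly $\kappa$ (possible as $\mu>\kappa$), the \emph{deterministic} strategy seeding $\{v_j:D_j\in D'\}$ plus one node of $C_2$ uses $\kappa+1=k$ seeds, reaches the $\kappa$ seeded subset-nodes, all $\nu$ element-nodes, and the single seeded $C_2$-node, so $\sigma_{C_1}=(\kappa+\nu)/|C_1|=1/(\mu+\nu)=\sigma_{C_2}$ (parity holds) and $\sigma=(\kappa+\nu)+1=\tau$. For soundness, suppose no $\kappa$-cover exists and let $p\in\PPP$ be feasible with $\gamma:=\sigma_{C_1}(p)=\sigma_{C_2}(p)$. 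As the $C_2$-nodes are isolated, $\sigma_{C_2}(p)=\E_{S\sim p}[|S\cap C_2|]/(\mu+\nu)$, so $C_2$ consumes exactly $(\mu+\nu)\gamma$ expected budget and the expected number of $C_1$-seeds is $s_1\le k-(\mu+\nu)\gamma$. On the $C_1$ side, for any fixed $S$ the nodes of $C_1$ reached are the $C_1$-seeds together with the elements covered by seeded subset-nodes, and since no $\kappa$ subsets cover $U$ this is at most $|S\cap C_1|+\nu-1+\mathds{1}[\,S\text{ contains}\ge\kappa+1\text{ subset-nodes}\,]$. Taking expectations over $S\sim p$, bounding the last term by Markov ($\le s_1/(\kappa+1)$), and using $\gamma\,|C_1|=\E_{S\sim p}[\,\sigma(S\cap C_1)\,]$, one gets $\gamma\,|C_1|\le s_1\cdot\tfrac{\kappa+2}{\kappa+1}+\nu-1$. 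Substituting $s_1\le\kappa+1-(\mu+\nu)\gamma$ and $|C_1|=(\mu+\nu)(\kappa+\nu)$ and solving for $\gamma$ yields $\gamma\le(\kappa+\nu+1)\big/\big[(\mu+\nu)\big((\kappa+\nu+1)+\tfrac1{\kappa+1}\big)\big]$, whence $\sigma(p)=\gamma(|C_1|+|C_2|)=(\mu+\nu)(\kappa+\nu+1)\gamma\le(\kappa+\nu+1)^2\big/\big((\kappa+\nu+1)+\tfrac1{\kappa+1}\big)<\tau$.

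The main obstacle — and the reason \PIMDP is genuinely more delicate here than \IMDP — is exactly this soundness step: a probabilistic strategy may \emph{partially} cover elements and hedge across branches, so one cannot argue naively that ``no $\kappa$-cover implies the target coverage level is unreachable''. The device that makes it work is the deterministic bound on the number of covered elements by $\nu-1$ plus the indicator of ``using at least $\kappa+1$ subset-seeds'', which, after being charged via Markov's inequality, becomes a statement about \emph{expected} budget; and then the parameters $N$, $|C_2|$, $k$, $\tau$ have to be calibrated (as above, so that a $\kappa$-cover needs exactly $\kappa$ seeds in $C_1$ and exactly one in $C_2$) to leave a strictly positive, polynomially bounded gap between the ``yes'' and ``no'' values. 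Everything else — completeness, the polynomial size and polynomial-time computability of the reduction — is routine.
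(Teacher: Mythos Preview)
Your reduction is correct, but it takes a substantially different route from the paper's. The paper's proof uses a \emph{single} community $C=V$, so the demographic-parity constraint is vacuous and \PIMDP collapses to $\max_{p\in\PPP}\sigma(p)$ on the plain bipartite set-cover graph (no padding, no second block). With $k=\kappa$, one always has $\sigma(p)\le k+\nu$ since the subset-nodes have no incoming edges; the claim is that $\opt_\PPP=k+\nu$ iff a $\kappa$-cover exists. Soundness there is direct: if $\sigma(p)=k+\nu$ then every element-node has $\sigma_u(p)=1$, hence every set in the support of $p$ reaches all of $B$, and the budget constraint $\sum_S p_S|S|\le k$ forces some such $S$ to have size at most $k$, yielding a cover.

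Your construction instead keeps two communities, pads $C_1$ so that a perfect $\kappa$-cover hits the same fractional level as one isolated seed in $C_2$, and then handles the probabilistic soundness by bounding reached $C_1$-nodes by $|S\cap C_1|+\nu-1+\mathds{1}[\,\ge\kappa+1\text{ subset-seeds}\,]$ and charging the indicator via Markov. This is more work, but it buys two things: (i) the hardness is witnessed on instances where the fairness constraint is genuinely active, which is arguably more faithful to the spirit of \PIMDP; and (ii) you obtain an explicit $1/\poly$ gap between the yes and no values, which the paper's argument does not provide. Conversely, the paper's single-community trick is much shorter and makes transparent that the hardness already comes from the fractional-budget influence-maximization layer, independently of any fairness interaction.
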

\begin{proof}
	We reduce from the \textsc{Set Cover} problem, where we are given a collection of subsets $D=\{D_1, \ldots, D_\mu\}$ over a ground set $U = \{U_1, \ldots, U_\nu\}$ and an integer $\kappa$, and we are asked whether there exists a collection of $\kappa$ subsets covering $U$. 
	We can assume w.l.o.g.\ that every element from $U$ appears in at least one subset from $D$ as otherwise the instance is trivially false.
	
	Given a \textsc{Set Cover} instance, we create a \PIMDP instance $G,\CCC, k$ as follows. The graph $G=(V, E)$ has a node set $V=A\cup B$, where $A=\{v_{1}, \ldots, v_{\mu}\}$, $B=\{u_{1}, \ldots, u_{\nu}\}$ and there is a directed edge from $v_{j}$ to $u_i$ whenever $U_i \in D_j$ with probability 1. For an illustration see the construction of the bipartite graph on the left in Figure~\ref{fig-alpha-beta-dp}. The community structure $\CCC$ consists of only one community $C = V$, we set $k=\kappa$, and use the IC model. We proceed by showing that there exists a set cover of size $\kappa$ if and only if there exists a fair solution $p\in \PPP$ with $\sigma(p)=k+\nu$. We note that the demographic parity fairness constraint is always fulfilled as there is a single community.
	(i) First, assume that there exists a set cover $D'$ of size $\kappa$. Then we can construct a probability distribution $p\in \PPP$ by setting $p_S=1$ for $S=\{v_{i}: D_i\in D'\}$ and 0 elsewhere. Clearly, $\sigma(p) = k+\nu$. 
	(ii) Now assume that there is $p\in \PPP$ with $\sigma(p) = k+\nu$. 
	Note that the expected spread restricted to $A$ is no more than $k$ as nodes in $A$ have no incoming edges, formally $\sum_{v\in A}\sigma_v(p) = \sum_{v\in A}\sum_{S\subseteq V:v\in S}p_S 
	=\sum_{S\subseteq V}|S\cap A|p_S\leq \sum_{S\subseteq V}|S|p_S\leq k$. Hence, from $\sigma(p) = k+\nu$, we conclude that $\sigma_{u}(p)=1$ for all $u\in B$. Note however that $\sigma_{u}(p) = \sum_{S \subseteq V: R_u\cap S\neq \emptyset} p_S$, where $R_u=\{u\}\cup N_u$. As $\sum_{S\subseteq V}p_S = 1$, we conclude that 
	$p_S=0$ for all sets $S\subseteq V$ whenever $R_u\cap S=\emptyset$ for some $u\in B$. The contrapositive of the latter statement is that $p_S>0$ implies $R_u\cap S\neq \emptyset$ for all $u\in B$.
	Since $\sum_{S\subseteq V}p_S \cdot |S|\le k$, there is at least one set $S\subseteq V$, such that $|S|\le k$ and $p_S>0$. Hence, there is $S \subseteq V$ such that $|S|\le k$ such that $R_u\cap S\neq \emptyset$ for all $u\in B$. If $S$ contains a node from $B$, we can replace it with an arbitrary in-neighbor from $A$ that has to exist by our assumption on the \textsc{Set Cover} instance. We obtain a set $S'\subseteq A$ of size at most $k$ that reaches all nodes in $B$ and the set $D':=\{D_i\in D: v_i\in S'\}$ is thus a set cover of size at most $\kappa$.
\end{proof}

For \iPIMDP we show an ever stronger result via a reduction from \textsc{Max-Coverage}: It cannot be approximated better than within $1-1/e$, unless $\PP=\NP$.

\begin{theorem}\label{thm: ipimd hard}
	There is no $(\alpha,0)$-approximation algorithm for \iPIMDP for a constant $\alpha>1-1/e$, unless $\PP=\NP$.
\end{theorem}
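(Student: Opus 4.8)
The plan is to reduce from \textsc{Max-Coverage}, which admits no polynomial-time $(1-1/e+\delta)$-approximation for any constant $\delta>0$ unless $\PP=\NP$ (Feige), and moreover this holds already on instances whose optimum $t^\star$ is arbitrarily large. Given a \textsc{Max-Coverage} instance with subsets $D=\{D_1,\dots,D_\mu\}$ over $U=\{U_1,\dots,U_\nu\}$ and parameter $\kappa$ (assuming, as in Theorem~\ref{thm: pimdp hard}, that every element lies in some subset and $\kappa\le\mu$), I build the \iPIMDP instance around the bipartite graph with a node $v_j$ per subset, a node $u_i$ per element, and a probability-$1$ edge $v_j\to u_i$ whenever $U_i\in D_j$, exactly as on the left of Figure~\ref{fig-alpha-beta-dp}. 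On top of this I add an \emph{amplification gadget}: to each element node $u_i$ I attach $R$ fresh pendant nodes via probability-$1$ edges out of $u_i$, where $R$ is a large polynomial in the input size, say $R\ge\kappa$. I set $k:=\kappa$, use the IC model, and take the community structure to be the single community $C=V$, so that demographic parity is automatically satisfied and a $(\alpha,0)$-approximation is simply an $\alpha$-approximation for $\max_{x\in\III}\sigma(x)$ on this instance. (If a rigid two-community version mirroring Figure~\ref{fig-alpha-beta-dp} is preferred, one would instead add a gadget community $C_2$ whose coverage can be matched exactly by independent distributions for every relevant $x$; the single-community choice is cleaner.)

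The first step is to pin down $\opt_\III(G,\CCC,k)=\kappa+(1+R)\,t^\star$. The bound ``$\ge$'' is immediate: seeding, deterministically, the $\kappa$ subset nodes realizing $t^\star$ reaches those $\kappa$ nodes, the $t^\star$ covered element nodes, and their $R\,t^\star$ pendants. For ``$\le$'', the crucial observation is that on this graph $\sigma(\cdot)$ is a monotone coverage function (each node covers its reachable set) and $\sigma(x)$ is precisely its multilinear extension; since $\III$ is the polytope of a rank-$k$ uniform matroid and the multilinear extension of a submodular function is convex along every direction $e_i-e_j$, pipage rounding turns any $x\in\III$ into an integral $S$ with $|S|\le k$ and $\sigma(S)\ge\sigma(x)$, so $\opt_\III=\max_{|S|\le k}\sigma(S)$. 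A short exchange argument — replace any seeded element node or pendant by a subset node covering that element — shows this integral optimum is attained inside the subset nodes, and by monotonicity with $|S|=\kappa$, so it equals $\kappa+(1+R)\,t^\star$ by definition of $t^\star$.

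Now suppose a polynomial-time $(\alpha,0)$-approximation algorithm exists with $\alpha>1-1/e$ a fixed constant. Run it to obtain $x\in\III$ with $\sigma(x)\ge\alpha(\kappa+(1+R)t^\star)$, pipage-round to an integral $S$ with $\sigma(S)\ge\sigma(x)$, and replace every seeded element node or pendant in $S$ by a subset node covering the corresponding element, obtaining $S'\subseteq\{v_1,\dots,v_\mu\}$ with $|S'|\le\kappa$. Counting reached nodes gives $\sigma(S)\le 2\kappa+(1+R)\tau$ where $\tau$ is the number of elements touched by $S$, and by construction $S'$ covers at least $\tau\ge(\sigma(S)-2\kappa)/(1+R)\ge\alpha t^\star-2\kappa/(1+R)>\alpha t^\star-2$ elements. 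With $\delta_0:=\alpha-(1-1/e)>0$ fixed, on \textsc{Max-Coverage} instances with $t^\star>4/\delta_0$ this is a $(1-1/e+\delta_0/2)$-approximation; since Feige's hard instances can be taken with $t^\star$ arbitrarily large, this contradicts $\PP\ne\NP$.

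The main obstacle, compared with the deterministic hardness results of Theorems~\ref{alpha-beta-dp} and~\ref{thm: alpha-epsilon-additive}, is that a feasible solution of \iPIMDP is a genuine probability distribution, so a priori a cleverly spread-out independent seeding might outperform every integral seed set and wash out the reduction; ruling this out is exactly the role of the multilinear-extension/pipage structure of $\sigma$ on the bipartite-plus-pendants graph. The second, more routine, point is the additive ``$+\kappa$'' term that the seed budget itself contributes to $\sigma$: the amplification factor $R$ is what makes $(1+R)t^\star$ dominate $\kappa$, so that the \iPIMDP approximation ratio transfers essentially losslessly to \textsc{Max-Coverage}. If the two-community variant is used instead, one additionally has to verify that the gadget community never renders the instance infeasible for independent distributions — which is delicate precisely because of the phenomenon in Lemma~\ref{lem:unbounded}(ii).
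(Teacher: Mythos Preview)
Your proposal is correct and follows essentially the same approach as the paper: reduce from \textsc{Max-Coverage}, take a single community so demographic parity is vacuous, amplify the element side so that the additive $\kappa$ from the seed budget becomes negligible, and use pipage rounding plus an exchange argument to pass from any $x\in\III$ to an integral subset-node seed set without losing spread. The only cosmetic difference is the amplification gadget---the paper replaces each element node by $k\nu$ parallel copies (each directly adjacent to the subset nodes), whereas you hang $R$ pendants off each element node; both achieve the same effect, with the paper's variant yielding the slightly cleaner bookkeeping $\sigma(S)=k+z\cdot k\nu$.
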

\begin{proof}
	We reduce from the \textsc{Max-Coverage} problem, where given a collection of subsets $D=\{D_1, \ldots, D_\mu\}$ over a ground set $U=\{U_1, \ldots, U_\nu\}$ and an integer $\kappa$, the goal is to find a subset $D'\subseteq D$ of size at most $\kappa$ that maximizes $|\bigcup_{S\in D'}S|$, the number of covered elements in $U$. We can assume w.l.o.g.\ that every element from $U$ appears in at least one subset from $D$ as otherwise also the optimum solution cannot cover it.
	
	Given a \textsc{Max-Coverage} instance, we define an \iPIMDP instance $G,\CCC, k$ as follows. The directed graph $G=(V, E)$ consists of a node set $A=\{v_1,\ldots,v_\mu\}$ and a node set $B=\cup_{i\in [\nu]} B_i$, where $B_i= \{u_i^1,\ldots, u_i^{k\nu}\}$. There is an edge from $v_{j}$ to $u_i^\ell$, for all $\ell\in [\kappa \nu]$, whenever $U_i \in D_j$. The construction is similar to the one in Theorem~\ref{thm: pimdp hard} with the difference that every node in the set $B$ is copied $\kappa\nu$ times. We adopt the IC model and set the probabilities of all edges to 1. The community structure $\CCC$ consists of only one community $C=V$ and we set $k=\kappa$. We proceed by showing the following claim:
	If there is a fair solution $x\in \III$, we can in polynomial time construct a set $S\subseteq A$ of size at most $k$ with $\sigma(S)\ge \sigma(x)$ and furthermore $\sigma(S) = k + z\cdot k\nu$ for some $z\in\{0, \ldots, \nu\}$.
	We note that we can write $\sigma(x)= \sum_{v\in V} \sigma_v(x)= \sum_{v\in V} (1-\prod_{w\in R(v)}(1-x_w))$, where $R(v)=\{v\}\cup N_v$. We now note that, for any $\eps>0$, the function $\sigma$ satisfies the $\eps$-convexity condition from Ageev and Sviridenko~\cite{AgeevS04} and thus Pipage rounding can be used in order to, in polynomial time, construct a set $S\subseteq V$ of size at most $k$ such that $\sigma(S)\ge \sigma(x)$. If $S$ contains a node from $B$, we can replace it with an in-neighbor from $A$ only increasing the overall coverage of $S$. Hence we get a set $S\subseteq A$ of size at most $k$ with $\sigma(S)\ge \sigma(x)$ and clearly $S$ reaches itself plus some $z \cdot k\nu$ nodes from $B$, thus $\sigma(S) = k + z\cdot k\nu$.
	
	Now assume that we have an $\alpha$-approximation algorithm for \iPIMDP with some $\alpha>1-1/e$. For the given \textsc{Max-Coverage} instance, we then solve the constructed \iPIMDP instance, obtaining a fair solution $x\in\III$ such that $\sigma(x)\ge \alpha \cdot \opt_\III(G,\CCC,k)$. We can now, using the above claim, in polynomial time, construct a set $S\subseteq A$ with $\sigma(S) = k + z\cdot k\nu\ge \sigma(x)\ge \alpha \cdot \opt_\III(G,\CCC, k)$ with some $z\in\{0, \ldots, \nu\}$. Let now $D^*$ be an optimal solution of size at most $\kappa=k$ of the \textsc{Max-Coverage} instance and let $S^*=\{v_i\in A: D_i\in D^*\}$ be the corresponding node set in $A$. Then, $\opt_\III(G,\CCC,k)\ge \sigma(S^*) = k + \opt \cdot k\nu$, where $\opt$ is the coverage of $D^*$. It follows that 
	$
	z \ge 
	\alpha \cdot \opt - (1-\alpha)/\nu
	\ge (\alpha - 1/\nu) \cdot \opt,
	$
	since $\opt \ge 1 \ge 1 -\alpha$. Recalling that $\alpha>1-1/e$, for large enough $\nu$ also $\alpha - 1/\nu> 1-1/e$ and thus we obtain an approximation algorithm for \textsc{Max-Coverage} with approximation factor bigger than $1-1/e$, which is impossible unless $\PP=\NP$~\cite{Feige98}.
\end{proof}

\section{Algorithms for \texorpdfstring{\iPIMDP}{iPIMDP} and \texorpdfstring{\PIMDP}{PIMDP}}\label{sec: apx algorithms}
We proceed with algorithms for  \iPIMDP and \PIMDP. 
First note that it is not feasible to evaluate the functions $\sigma$ and $\sigma_C$ involved in the optimization problems exactly. It is however well understood that the functions can be approximated using sampling. We briefly sketch how this can be achieved.

\subsection{Approximation via Sampling}
Recall that, for a seed set $S$, $\sigma_v(S)=\Pr_\LLL[v\in \rho_\LLL(S)]=\E_\LLL[\ones_{v\in \rho_\LLL(S)}]$, where $\ones_P$ is the indicator function that is 1 if $P$ is true and zero otherwise. Now, for any $\eps>0$, using Chernoff-Hoeffding bounds, we can obtain a function $\tilde \sigma_v$ that, with high probability, is an additive $\eps$-approximation to $\sigma_v$ for all sets $S\subseteq V$ by approximating the expected value with an average over a set of sampled live-edge graphs $\LL$, where $|\LL|$ is polynomial in $n$ and $\eps^{-1}$, see, e.g., Lemma 4.1 in the full version of the article by Becker et al.~\cite{BeckerDGG21}. Formally, we define $\tilde \sigma_v(S):= \frac{1}{|\LL|}\sum_{L\in\LL} \ones_{v\in \rho_{L}(S)}$. As $\sigma_C$ is the average of $\sigma_v$ for all $v\in C$, this average is approximated well also by the average $\tilde \sigma_C(S):=\sum_{v\in C} \tilde \sigma_v(S)/|C|$ of the approximations $\tilde \sigma_v$. Similarly, we can define $\tilde \sigma(S):=\sum_{v\in V} \tilde \sigma_v(S)$ and $\tilde \sigma$ is even a multiplicative $(1+\eps)$-approximation to $\sigma$, see, e.g., Proposition 4.1 in the article by Kempe et al.~\cite{kempe}. In order to obtain approximations also for the vector versions of the functions, we sample also a polynomial number of subsets $S\subseteq V$ and average over the values of the set functions at the sets $S$. For brevity, we use $\LL$ to denote both the set and a uniformly distributed random variable over $\LL$.




\subsection{Approximation Algorithm for \texorpdfstring{\iPIMDP}{iPIMDP}}
We start by giving an approximation algorithm for \iPIMDP. Given the above discussion, we consider $\tilde\sigma$ and  $\tilde\sigma_C$ instead of $\sigma$ and $\sigma_C$:
\[
    \max_{x \in \III}\{\tilde\sigma(x): \exists \gamma \text{ s.t.\ } \tilde\sigma_{C}(x) = \gamma \;\forall C \in \CCC\}. \tag*{\iPIMDPa}
\]
As discussed above, an $(\alpha,\beta)$-approximation $x$ for an instance $(G, \CCC, k)$ of \iPIMDPa approximates \iPIMDP by adding a multiplicative error in the objective and an additive error in the fairness violation, that is it satisfies $\sigma(x)\geq (\alpha-\eps)\opt_\III(G, \CCC, k)$ and $\sigma_C(x)\geq \beta\sigma_{C'}(x)-\eps$, for any arbitrary small $\eps>0$.
We can thus focus on giving an approximation algorithm for \iPIMDPa. Formally, we prove the following theorem.
\begin{theorem} \label{theorem:approx-node}
    There exists a $(1-1/e, 1-1/e)$-approximation algorithm for $\iPIMDPa$.
\end{theorem}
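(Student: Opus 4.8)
The plan is to pass to the multilinear relaxation and run a continuous greedy whose update directions are constrained so that all communities are covered at the same rate. By the reduction in the excerpt it suffices to treat \iPIMDPa, so fix the sampled family $\mathcal{M}$ of live-edge graphs; then $\tilde\sigma$ and each $\tilde\sigma_C$ are (suitably normalized) monotone coverage functions on $2^V$, and I keep the same symbols for their multilinear extensions, which are monotone, concave along every nonnegative direction, and, together with their gradients, evaluable in polynomial time from $\mathcal{M}$. Recall $\III=\{x\in[0,1]^n:\ones^Tx\le k\}$ is a down-closed polytope. Let $x^*$ be an optimal solution of \iPIMDPa, of value $\opt$ and common coverage $\gamma^*:=\tilde\sigma_C(x^*)$; if $\gamma^*=0$ then $x=0$ is optimal, so assume $\gamma^*>0$ and guess $\gamma^*$ to within a factor $1\pm\eps$ among the $\poly(n,\eps^{-1})$ values of a geometric grid (I write $\gamma^*$ for this guess below).

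The algorithm is the process $y:[0,1]\to\III$ with $y(0)=0$ and $\dot y(t)=v(t)$, where $v(t)$ solves the linear program
\begin{equation*}
\max_{v\in\III}\ \langle v,\nabla\tilde\sigma(y(t))\rangle\quad\text{s.t.}\quad\langle v,\nabla\tilde\sigma_C(y(t))\rangle=\gamma^*-\tilde\sigma_C(y(t))\ \text{ for all }C\in\CCC,
\end{equation*}
discretized into $\poly(n,\eps^{-1})$ steps and run with the sampled estimates of $\tilde\sigma,\tilde\sigma_C$ (this discretization, and the passage back from \iPIMDPa to \iPIMDP, are where the additive slack in the parity constraint enters). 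The role of the equality constraints is that $\tfrac{d}{dt}\tilde\sigma_C(y(t))=\langle v(t),\nabla\tilde\sigma_C(y(t))\rangle$ is forced to equal $\gamma^*-\tilde\sigma_C(y(t))$ simultaneously for all $C$; since $\tilde\sigma_C(y(0))=0$ for every $C$, the coverages $\tilde\sigma_C(y(t))$ therefore stay equal to a common value $g(t)$, which solves $\dot g=\gamma^*-g$, i.e.\ $g(t)=(1-e^{-t})\gamma^*$. Hence $y(1)$ satisfies demographic parity exactly, which is what yields $\beta$-feasibility. Note that a plain LP relaxation cannot do this: any LP-representable surrogate over-estimates coverage, so an equality imposed on the surrogate does not pin down the true coverages $\tilde\sigma_C$, which can still diverge from above — this is exactly why we track the multilinear functions themselves.

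For the objective we compare each step with $x^*$. By the standard inequality for multilinear extensions of monotone submodular functions, $\langle x^*,\nabla\tilde\sigma(y)\rangle\ge\tilde\sigma(x^*\vee y)-\tilde\sigma(y)\ge\opt-\tilde\sigma(y)$, and for every $C$, $\langle x^*,\nabla\tilde\sigma_C(y)\rangle\ge\gamma^*-\tilde\sigma_C(y)$; thus $x^*$ already meets every community's required rate, with room to spare. The crux is to upgrade this into a point $v\in\III$ that meets every community's rate \emph{exactly} while keeping $\langle v,\nabla\tilde\sigma(y)\rangle\ge\opt-\tilde\sigma(y)$: given this, the LP value at $y(t)$ is $\ge\opt-\tilde\sigma(y(t))$, so $\dot{\tilde\sigma}(y(t))\ge\opt-\tilde\sigma(y(t))$, and integrating gives $\tilde\sigma(y(1))\ge(1-1/e)\opt$. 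This extraction is the main obstacle, because directions in $\III$ have nonnegative inner product with every $\nabla\tilde\sigma_C$, so one cannot lower an over-served community's rate by a local correction, and overlapping communities cannot be boosted differentially; I would attack it by an intermediate-value/compactness argument over the polytope $\{v\in\III:\langle v,\nabla\tilde\sigma_C(y)\rangle=\gamma^*-\tilde\sigma_C(y)\ \forall C\}$, or, failing a clean exact statement, by relaxing the per-step equalities to a narrow band whose width absorbs a $1-1/e$ factor — which, together with the discretization and sampling errors, yields precisely the claimed $(1-1/e,1-1/e)$-approximation (with an arbitrarily small additive parity violation when interpreted for \iPIMDP).
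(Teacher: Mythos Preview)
Your constrained continuous-greedy idea is natural, but the step you yourself call ``the crux'' is a genuine gap, and neither proposed patch closes it. At each $t$ you need some $v\in\III$ with $\langle v,\nabla\tilde\sigma(y)\rangle\ge\opt-\tilde\sigma(y)$ and with all the linear forms $\langle v,\nabla\tilde\sigma_C(y)\rangle$ hitting one common prescribed value. From $x^*$ you only get the one-sided bound $\langle x^*,\nabla\tilde\sigma_C(y)\rangle\ge\gamma^*-\tilde\sigma_C(y)$; there is no companion upper bound, and the slack can differ arbitrarily across communities. An intermediate-value sweep along $\lambda x^*$ equalizes one community at a time, not all simultaneously, and because every $\nabla\tilde\sigma_C$ is coordinatewise nonnegative you cannot depress an over-served community's rate without also cutting into the others and into the objective. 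Relaxing the per-step equalities to a band fails for the same reason: any band wide enough to contain $x^*$ already lets the per-step rates---and hence the running coverages $\tilde\sigma_C(y(t))$---drift apart by that width, so no $(1-1/e)$-feasibility falls out of the band. Note also that if your argument went through verbatim it would deliver \emph{exact} parity at $t=1$, i.e.\ a $(1-1/e,1)$-approximation; you have not identified where a constant-factor feasibility loss must enter, and discretization noise cannot account for it.

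The paper's route is precisely the one you dismiss. It replaces each $\tilde\sigma_v$ by the piecewise-linear over-estimator $\lambda_v(x)=\E_{L}\big[\min\{1,\sum_{i:\,v\in\rho_L(i)}x_i\}\big]$, uses the standard two-sided sandwich $(1-1/e)\,\lambda_v\le\tilde\sigma_v\le\lambda_v$ on $\III$, and solves the single polynomial-size LP $\max_{x\in\III}\{\lambda(x):\exists\gamma\text{ s.t.\ }\lambda_C(x)=\gamma\ \forall C\in\CCC\}$. Because both sides of the sandwich lose at most $1-1/e$, equality of the surrogates $\lambda_C$ forces $\tilde\sigma_C(x)/\tilde\sigma_{C'}(x)\in[1-1/e,\,e/(e-1)]$ for all $C,C'$, which is exactly $(1-1/e)$-feasibility, and the same sandwich on $\lambda$ versus $\tilde\sigma$ gives the $(1-1/e)$ factor on the objective. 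Your remark that an LP surrogate ``does not pin down the true coverages'' is correct---and that slack is not a bug: it is exactly the $\beta=1-1/e$ the theorem permits.
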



We first note that the objective function of $\iPIMDPa$ is not linear, since the probability of sampling a seed set $S$ from a distribution $x\in \III$ is $\prod_{i\in S}x_i\prod_{i\notin S}(1-x_i)$. Our approach here is to approximate \iPIMDPa by a linear program (LP) of polynomial size. 
We follow a similar notation as Becker et al.~\cite{BeckerDGG21}, for a live-edge graph $L\in \LL$ and a node $v\in V$, we let 
$q_v(L, x)$ be the probability of sampling a set $S$ that can reach $v$ in live-edge graph $L$, that is $q_v(L, x)=\Pr_{S\sim x}[v\in \rho_L(S)]$. We can write $q_v(L, x) = 1 - \prod_{i\in V: v\in \rho_L(i)} (1 - x_i)$. 
It is easy to observe, see, e.g., Observation 4.4 in the paper by Becker et al.~\cite{BeckerDGG21}, that $q_v(L, x)$ can be approximated within a constant factor by a function
$
p_v(L, x) := \min\{ 1, \sum_{i\in V: v\in \rho_{L}(i)} x_i\}
$ such that
\begin{align}\label{eq:pandq}
	q_v(L, x) \in[(1-1/e) \cdot p_v(L, x), p_v(L, x)].
\end{align}
By defining $\lambda_v(x):=\E_\LL[p_v(\LL, x)]$, $\lambda(x):=\sum_{v\in V} \lambda_v(x)$, as well as $\lambda_C(x):=\frac{1}{|C|}\sum_{v\in C} \lambda_v(x)$ we obtain (piece-wise) linear functions. Recalling that $\tilde \sigma_v(x) = \E_\LL[q_v(\LL, x)]$ together with the relation between $p_v$ and $q_v$ directly implies that $\lambda_v$, $\lambda$, and $\lambda_C$ approximate $\sigma_v$, $\sigma$, and $\sigma_C$ for all nodes $v\in V$ and communities $C\in \CCC$, respectively. Thus, we consider the following problem
\[
\max_{x \in \III}\{\lambda(x): \exists \gamma\text{ s.t.\ } \lambda_{C}(x) = \gamma \text{ for all }C \in \CCC\}.\tag*{\LP}
\]
We then get the following lemma.
\begin{lemma}\label{lemma-app-lp}
	Let $x \in \III$ be an optimal solution to \LP,
	then $x$ is a $(1-1/e, 1-1/e)$-approximation to $\iPIMDPa$.
\end{lemma}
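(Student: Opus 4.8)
The plan is to carry the $(1-1/e)$ guarantees that are built into the piecewise-linear surrogates $\lambda_v,\lambda_C,\lambda$ over to the sampled functions $\tilde\sigma_v,\tilde\sigma_C,\tilde\sigma$ by means of the pointwise sandwich in~\eqref{eq:pandq}. Taking $\E_\LL[\cdot]$ in~\eqref{eq:pandq} first gives $(1-1/e)\,\lambda_v(x)\le\tilde\sigma_v(x)\le\lambda_v(x)$ for every $v\in V$ and every $x\in\III$; summing over $v\in V$ and averaging over $v\in C$ then propagates the same two-sided estimate to $\tilde\sigma$ versus $\lambda$ and to $\tilde\sigma_C$ versus $\lambda_C$. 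Armed with these three sandwiches, the statement splits into checking $(1-1/e)$-feasibility of $x$ for \iPIMDPa and the objective bound $\tilde\sigma(x)\ge(1-1/e)\,\opt_{\iPIMDPa}$.

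For feasibility, use that $x$ is feasible for \LP, so $\lambda_C(x)=\gamma$ for all $C\in\CCC$ and some $\gamma$. Then for any $C,C'\in\CCC$ we have $\tilde\sigma_C(x)\ge(1-1/e)\,\lambda_C(x)=(1-1/e)\gamma$ while $\tilde\sigma_{C'}(x)\le\lambda_{C'}(x)=\gamma$, hence $\tilde\sigma_C(x)\ge(1-1/e)\,\tilde\sigma_{C'}(x)$; so $x$ is $(1-1/e)$-feasible.

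For the objective, optimality of $x$ for \LP yields $\tilde\sigma(x)\ge(1-1/e)\,\lambda(x)=(1-1/e)\,\opt_{\LP}$, so it is enough to prove $\opt_{\LP}\ge\opt_{\iPIMDPa}$. I would take a \iPIMDPa-optimal $x^*\in\III$, so $\tilde\sigma(x^*)=\opt_{\iPIMDPa}$ and $\tilde\sigma_C(x^*)=\gamma^*$ for all $C$, and exhibit a point feasible for \LP of value at least $\opt_{\iPIMDPa}$. The natural starting point is $x^*$ itself: by the sandwich, $\lambda_v(x^*)\ge\tilde\sigma_v(x^*)$ for all $v$, so $\lambda_C(x^*)\ge\gamma^*$ for every $C$ and $\lambda(x^*)\ge\opt_{\iPIMDPa}$; what is missing is that the values $\lambda_C(x^*)$ need not be equal, so one must modify $x^*$ into a nearby $\hat x\in\III$ that equalizes all $\lambda_C$ while still satisfying $\lambda(\hat x)\ge\opt_{\iPIMDPa}$. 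Combining this with the feasibility part shows that $x$ is a $(1-1/e,1-1/e)$-approximation to \iPIMDPa.

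The step I expect to be the main obstacle is precisely this equalization. Lowering seed probabilities decreases every $\lambda_C$ simultaneously, and when communities overlap in their sets of reachable seeds one cannot in general pull the overshooting communities down to the bottleneck level without also lowering the bottleneck community. The argument must therefore exploit that $\gamma^*$ is attained \emph{simultaneously} as the common value $\tilde\sigma_C(x^*)$ across $\CCC$, and (in the actual LP realization of \LP, which introduces for each node $v$ and each sampled live-edge graph $L$ a variable capped by $\min\{1,\sum_{i\,:\,v\in\rho_L(i)}x_i\}$) that there is enough flexibility to absorb the gap between $\lambda_C$ and $\tilde\sigma_C$ while pinning every community average to the common level $\gamma^*$. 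Everything else---the three sandwich inequalities and the two short derivations above---is routine.
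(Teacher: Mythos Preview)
Your approach mirrors the paper's: both derive the pointwise sandwich $(1-\tfrac1e)\,\lambda_v\le\tilde\sigma_v\le\lambda_v$ from~\eqref{eq:pandq} and use it twice---once for $(1-\tfrac1e)$-feasibility (your argument is identical to the paper's there) and once for the objective.

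Where you differ is in care, not in method. For the objective the paper writes the chain \emph{per node},
\[
\tilde\sigma_v(x)\;\ge\;(1-\tfrac1e)\,\lambda_v(x)\;\ge\;(1-\tfrac1e)\,\lambda_v(x^*)\;\ge\;(1-\tfrac1e)\,\tilde\sigma_v(x^*),
\]
attributing the middle inequality to ``the optimality of $x$'', and then sums over $v$. But optimality of $x$ for \LP only controls the aggregate $\lambda(x)$, and even the summed inequality $\lambda(x)\ge\lambda(x^*)$ would require $x^*$ to be \emph{feasible} for \LP, i.e.\ that all $\lambda_C(x^*)$ coincide---which is exactly the equalization issue you isolate and which the paper never addresses. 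So the gap you flag is genuine, and the paper's own proof does not close it; it simply asserts the step. Your instinct to resolve it through the concrete LP of Lemma~\ref{lemma-app-lp-polytime} is the right direction: if the coupling constraint~\eqref{constraint} is read as an upper bound rather than an equality, then taking $y_{v,L}:=q_v(L,x^*)$ makes $(x^*,y)$ feasible for that LP with common community value $\gamma^*$ and objective (proportional to) $\tilde\sigma(x^*)$, which yields the needed comparison directly---no modification of $x^*$ into some $\hat x$ is required.
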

\begin{proof}
	Let $x^*$ be optimal for $\iPIMDPa$. Then
	\begin{align*}
		\tilde \sigma_v(x)
		&=\E_{\LL}[q_v(\LL, x)]
		\ge  \big(1-\frac{1}{e}\big) \cdot \E_{\LL}[p_v(\LL, x)] 
		\ge \big(1-\frac{1}{e}\big) \cdot \E_{\LL}[p_v(\LL, x^*)]\\
		&\ge \big(1-\frac{1}{e}\big) \cdot \E_{\LL}[q_v(\LL, x^*)] = (1-\frac{1}{e})\cdot \tilde \sigma_v(x^*),
	\end{align*}
	where we used two times observation~\eqref{eq:pandq}, and the optimality of $x$.
    We recall that $\tilde \sigma(x)=\sum_{v\in V}\tilde \sigma_v(x)$ for any $x$, thus, this shows the approximation on the objective function. For $C,C'\in\CCC$, we have 
	\begin{align*}
		\tilde \sigma_C(x)
		\ge \big(1-\frac{1}{e}\big) \cdot \lambda_C(x)
		= \big(1-\frac{1}{e}\big) \cdot \lambda_{C'}(x) \ge \big(1-\frac{1}{e}\big) \cdot \tilde\sigma_{C'}(x)
	\end{align*}
	again, using observation~\eqref{eq:pandq} twice as well as the feasibility of $x$. Similarly,
	\begin{align*}
		\tilde\sigma_C(x)
		\le \lambda_C(x)
		=  \lambda_{C'}(x)
		\le \frac{e}{e-1} \cdot \tilde\sigma_{C'}(x),
	\end{align*}
	and thus $x$ is $(1-1/e)$-feasible for \iPIMDPa.
\end{proof}
We now observe that the optimization problem $\max_{x \in \III}\{\lambda(x): \exists \gamma \text{ s.t.\ } \lambda_{C}(x) = \gamma \text{ for all } C \in \CCC\}$ can be modeled as a linear program of polynomial size. The idea is to model the minimum in the definition of $p_v(L, x)$ by a variable $y_{v, L}$, for every $L\in\LL$, similar as in the standard LP relaxation of \textsc{Set Cover}.
\begin{lemma}\label{lemma-app-lp-polytime}
	The problem \LP can be solved in polynomial time using linear programming.
\end{lemma}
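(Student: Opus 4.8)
The plan is to rewrite \LP as an explicit linear program of polynomial size and then invoke polynomial‑time solvability of linear programs (e.g.\ the ellipsoid or an interior‑point method). The only obstruction to linearity in \LP is the inner minimum $p_v(L,x)=\min\{1,\sum_{i\in V:\,v\in\rho_L(i)}x_i\}$, and since the sampling step has already phrased \LP over a \emph{fixed} family $\LL$ of live‑edge graphs with $|\LL|=\poly(n,\eps^{-1})$, there are only polynomially many such minima to remove. Mimicking the LP for (maximum) \textsc{Set Cover}, I would introduce one variable $y_{v,L}$ for every $v\in V$ and $L\in\LL$, intended to equal $p_v(L,x)$, and replace the minimum by the two linear inequalities $y_{v,L}\le 1$ and $y_{v,L}\le\sum_{i\in V:\,v\in\rho_L(i)}x_i$. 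Together with a scalar variable $\gamma$ for the common community value this yields
\begin{align*}
\max\ & \tfrac{1}{|\LL|}\sum_{v\in V}\sum_{L\in\LL}y_{v,L}\\
\text{s.t.}\ & y_{v,L}\le 1,\quad y_{v,L}\le\sum_{i\in V:\,v\in\rho_L(i)}x_i \qquad\text{for all }v\in V,\ L\in\LL,\\
& \frac{1}{|C|}\sum_{v\in C}\frac{1}{|\LL|}\sum_{L\in\LL}y_{v,L}=\gamma \qquad\text{for all }C\in\CCC,\\
& x\in\III,\quad y\ge 0,\quad\gamma\in\RR.
\end{align*}
This program has $n+n|\LL|+1$ variables and $O(n|\LL|+|\CCC|)$ constraints, all of polynomial bit‑size once the sets $\rho_L(i)$ have been computed (polynomial work per sampled $L$), hence it is of polynomial size and solvable in polynomial time.

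It then remains to argue that this LP and \LP have the same optimum and that an optimal LP solution induces an optimal solution of \LP. One direction is immediate: any $x$ feasible for \LP, together with $y_{v,L}:=p_v(L,x)$ and $\gamma:=\lambda_C(x)$ (its common value over $C\in\CCC$), is feasible for the LP with objective value $\lambda(x)$, so the LP optimum is at least $\opt$ of \LP. For the reverse inequality I would argue that at an optimal LP solution one may assume $y_{v,L}=p_v(L,x)$ for every pair $(v,L)$: the variable $y_{v,L}$ occurs in the program only positively in the objective and inside the equality constraints, and $y_{v,L}\mapsto\min\{1,\sum_{i:\,v\in\rho_L(i)}x_i\}$ is the largest value respecting the two upper bounds, so whenever $y_{v,L}<p_v(L,x)$ one should be able to raise it (simultaneously raising the contributions of the other communities and re‑adjusting $\gamma$) without decreasing the objective. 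Once $y_{v,L}=p_v(L,x)$ holds for all $(v,L)$, the equality constraints read $\lambda_C(x)=\gamma$ for all $C$, i.e.\ the $x$‑part of the solution is feasible for \LP with objective $\lambda(x)$, which both matches the bound and exhibits the desired optimal solution of \LP.

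The step I expect to be delicate is precisely this tightness claim. For \textsc{Set Cover} the coverage variables appear \emph{only} in the objective, so tightness at an optimum is trivial; here the $y_{v,L}$ of nodes belonging to some community also sit in the fairness equalities, so pushing a single $y_{v,L}$ up to $p_v(L,x)$ may destroy those equalities, and one needs an exchange/uncrossing argument -- or a reformulation of the fairness requirement of \LP directly in terms of the auxiliary variables $y_{v,L}$, so that feasibility of the LP and of \LP coincide by construction -- to show that an optimal LP solution can be chosen with all $y_{v,L}$ tight. I expect the bulk of the work to go into this consistency claim; the remaining bookkeeping (counting variables and constraints, bounding bit‑sizes, and computing the sets $\rho_L(i)$) is routine.
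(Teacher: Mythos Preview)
Your formulation differs from the paper's in one crucial place: you relax the minimum via the two upper bounds $y_{v,L}\le 1$ and $y_{v,L}\le\sum_{i:v\in\rho_L(i)}x_i$, whereas the paper imposes the \emph{equality} $\sum_{i:v\in\rho_L(i)}x_i=y_{v,L}$ together with the box constraint $y_{v,L}\in[0,1]$. With the equality, the auxiliary variables are pinned down by $x$, so the fairness constraints of the LP read $\lambda_C(x)=\gamma$ by construction and no tightness argument is needed at all --- this is precisely the ``reformulation \dots\ so that feasibility of the LP and of \LP coincide by construction'' that you mention in your last paragraph but do not pursue. The paper's proof is therefore essentially a one-liner: write down this LP and observe it has polynomial size.

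The concern you flag about your inequality formulation is not merely ``delicate''; it is a genuine obstruction. At an optimum of your LP the $y_{v,L}$ need not be tight: if for the optimal $x$ two communities satisfy $\lambda_{C_1}(x)>\lambda_{C_2}(x)$, the LP will set the $y$'s of nodes in $C_1$ strictly below $p_v(L,x)$ so that both averages meet the common value $\gamma$, and pushing any such $y_{v,L}$ back up to $p_v(L,x)$ immediately breaks the equality for $C_1$. Concretely, with a single live-edge graph, $V=\{s,a,b\}$, $C_1=\{a\}$, $C_2=\{b\}$, $s$ reaching both $a$ and $b$, and $k=1$, your LP is optimized at $x_s=1$ with $\lambda_{C_1}(x)=\lambda_{C_2}(x)=1$; but it is \emph{equally} optimized at nearby $x$'s where $\lambda_{C_1}\neq\lambda_{C_2}$ and the $y$'s are slack. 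So the $x$-part of an optimal LP solution is not guaranteed to be \LP-feasible, and your proposed ``raise $y$ and re-adjust $\gamma$'' step does not work as written. Establishing that \emph{some} optimal LP solution is fully tight (equivalently, that the maximin value $\max_{x\in\III}\min_C\lambda_C(x)$ coincides with the equalized optimum of \LP, at least when communities cover $V$) would require a separate structural argument that you have not supplied; the paper avoids the whole issue by taking the equality route.
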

\begin{proof}
	The problem can be formulated as the following polynomial size linear program
	\begin{align}
		\max\; \sum_{v\in V} \sum_{L\in\LL} y_{v,L}&\nonumber\\
		\text{s.t.\ }
		\frac{1}{|C|} \sum_{v \in C} \frac{1}{|\LL|}\sum_{L\in\LL} y_{v,L} &= \gamma \;\forall C \in \CCC\nonumber\\
		\sum_{i: v\in \rho_{L}(i)} x_i &= y_{v,L} \;\forall v\in V, L\in\LL \label{constraint}\\
		x\in\III, \gamma\in [0,1]&,\text{ and }\nonumber \\
		y_{v,L} \in [0,1]&\;\forall v\in V, L\in\LL.\qedhere
	\end{align}
\end{proof}
Lemmata~\ref{lemma-app-lp-polytime}~and~\ref{lemma-app-lp} directly imply that there is a polynomial time $(1-1/e,1-1/e)$-approximation for \iPIMDPa and thus also establishes Theorem~\ref{theorem:approx-node}. In our experimental study we refer to the described algorithm as \algo{ind\_lp}.



\subsection{Algorithms for \texorpdfstring{\PIMDP}{PIMDP}} 
In this subsection, we present algorithms for \PIMDP that are based on greedy strategies and solving a (comparatively) small linear program. We again focus on the problem with the approximate functions $\tilde \sigma$ and $\tilde \sigma_C$ and refer to it as \PIMDPa (it is defined analogously to \iPIMDPa). 
Differently from \iPIMDPa, the objective function of \PIMDPa is linear and hence it can be formulated as a linear program by introducing a variable for each seed set $S \subseteq 2^V$. However, the size of such a linear program would be $\Theta(2^n)$, the dimension of $\PPP$.
Our approach here is to restrict to a subset $\QQQ\subseteq \PPP$ in such a way that the linear program at hand becomes more tractable. More precisely, the two heuristics that we propose are based on solving the following linear program for two different choices of $\QQQ$
\[
    \max_{p \in \QQQ}\{\tilde\sigma(p): \exists \gamma \text{ s.t.\ }\tilde\sigma_{C}(p) = \gamma \text{ for all } C \in \CCC\}. \tag*{\LPQ}
\]
In the first heuristic, \algo{grdy\_grp+lp}, we choose $\QQQ$ by restricting the set of non-zero variables to sets that either (1) have a large coverage with respect to a certain community, or (2) have a large overall coverage. Formally, $\QQQ:=\{p\in\PPP: p_S=0 \text{ for all }S\notin \SSS_1\cup\SSS_2\}$, where $\SSS_1=\{S_{i}:i\in[m]\}$ with $S_{i} = \argmax_{S\in\SSS}\{\tilde \sigma_{C_i}(S) : |S|\le k\}$, $\SSS_2:=\{T_i:i\in \{0\}\cup[2k]\}$ with $T_i := \argmax_{S\in\SSS}\{\tilde \sigma(S): |S|\le i\}$. Here the choice of $2k$ in the definition of $\SSS_2$ is more or less arbitrary, the rationale being that due to submodularity of $\sigma$ it is unlikely that choosing a set of size twice the allowed expectation leads to a profitable gain in overall spread. Clearly, the idea behind this choice of $\QQQ$ is to provide the LP with sufficiently many degrees of freedom to both achieve a high overall coverage and a good coverage for each community.

In the second heuristic, \algo{maxmin+lp}, we define $\QQQ:=\{p\in\PPP: p = \lambda_0 \cdot \ones_{\emptyset} + \sum_{i\in [m]} \lambda_i \ones_{S_i} + \lambda_{m+1} q\}$, where $\ones_{S}$ is the $2^n$-dimensional vector that is 1 at position $S\subseteq V$ and zero elsewhere, and $q\in\PPP$ is the distribution computed by the algorithm of Becker et al.~\cite{BeckerDGG21} for the maximin criterion.
In other words, we restrict to probability distributions in $\PPP$ that are linear combinations of (1) a distribution computed for the maximin criterion and (2) the degenerate distributions of the empty set and the sets maximizing the respective community coverage. The rationale of this choice of $\QQQ$ is to profit from the efficiency of the maximin solution but enabling the LP solver to improve the incurred violation in demographic parity by putting additional probability on the deterministic distributions corresponding to under-represented communities.

\section{Experiments} \label{sec: experiments}
In this section, we report on a detailed experimental study. We evaluate a diverse set of algorithms for influence maximization in terms of their efficiency (both overall coverage and run-time) and demographic parity fairness.\footnote{The code can be downloaded from 
\url{https://github.com/sajjad-ghobadi/demographic_parity.git}} In our evaluation, we use random, synthetic, and real data sets. 
\paragraph{Algorithms.}
In addition to \algo{ind\_lp}, \algo{grdy\_grp+lp}, and \algo{maxmin+lp}, our study includes the following competitors:
\begin{description}
    \item[\algo{grdy\_im}] the greedy algorithm for IM,
    \item[\algo{grdy\_maxmin}] the algorithm that greedily maximizes the minimum community coverage,
    \item[\algo{grdy\_prop}] a simple heuristic that greedily maximizes $\sigma_{C_i}$ for $i\in[m]$ using $k|C_i|/n$ seeds,
    \item[\algo{milp}] the MILP of Farnadi, Babaki, and Gendreau~\cite{FarnadiBG20},
    \item[\algo{moso}] an algorithm based on multi-objective submodular optimization due to Tsang et al.~\cite{TsangWRTZ19},
    \item[\algo{mult\_weight}] the multiplicative weights routine for the set-based problem of Becker et al.~\cite{BeckerDGG21},
    \item[\algo{myopic}] a simple heuristic by Fish et al.~\cite{Fish19}, and
    \item[\algo{uniform}] the uniform solution to \iPIMDP.
\end{description}
We refer the reader to the original papers for details about \algo{moso} and \algo{mult\_weight}. We proceed with a note on the \algo{milp} algorithm by Farnadi, Babaki, and Gendreau~\cite{FarnadiBG20} that we use under their equity fairness notion (equivalent to demographic parity) relaxed by an additive $0.1$ as they propose, we would like to remark the following. The mixed-integer linear program (MILP) that the authors solve is very similar to the LP that we propose in the proof of Lemma~\ref{lemma-app-lp-polytime} with the main differences that the authors restrict the $x$-variables to be binary and require the constraint in~\eqref{constraint} to hold with $\ge$ rather than equality. We stress that the $y$-variables in their MILP (called $\alpha$ in their paper) are not decision variables that indicate whether a node is covered anymore. More precisely, as a consequence of the fairness constraints, these variables may take any value between $0$ and $1$. As a result the seed set computed by \algo{milp} may not satisfy the relaxed fairness constraints at all. The \algo{myopic} heuristic, after choosing the node of maximum degree in the first iteration, always selects the node with minimum probability of being reached.
We note that \algo{grdy\_maxmin}, \algo{mult\_weight}, \algo{moso}, and \algo{myopic} were designed for the maximin criterion. We emphasize that \algo{mult\_weight}, \algo{ind\_lp}, \algo{grdy\_grp+lp}, \algo{maxmin+lp}, and \algo{uniform} compute distributions and are thus designed for achieving ex-ante guarantees, while the other algorithms compute deterministic seed sets. 
For our algorithms from the previous section we relax the strict demographic parity constraints for some parameter $\eta\in[0,1)$ as follows. For the algorithm \algo{ind\_lp} we substitute the constraints in~\eqref{constraint} with 
$ 
y_{v,L} 
\in [\sum_{i: v\in \rho_{L}(i)} x_i - \eta, \sum_{i: v\in \rho_{L}(i)}x_i]
$
for each $v,L$, where $\eta\in\{0, 1/4, 1/3, 1/2\}$. 
For \algo{grdy\_grp+lp} and \algo{maxmin+lp}, we replace $\gamma$ in the demographic parity constraints in \LPQ by $\gamma\pm \eta$ for $\eta\in\{0, x/16, x/8,x/4\}$, where $x$ is the violation in demographic parity that \algo{grdy\_im} suffers.

\paragraph{Instances.}
We use random, synthetic and real world graphs. (1)~Our random graphs are generated using the Barabasi-Albert model with parameter $m =2$, i.e., connecting a newly added node to two existing nodes. (2)~The synthetic networks are the ones used by Tsang et al.~\cite{TsangWRTZ19} that go back to the work of Wilder et al.~\cite{WilderOHT18}.
Every node in these networks is associated with some attributes (region, ethnicity, age, gender and status) and nodes with the same attributes are more likely to connect to each other. Each network consists of 500 nodes and the attributes induce communities. (3)~We use the same set of real world instances as Fish et al.~\cite{Fish19}. We considered the largest weakly connected component for all these graphs in order to make fair coverage more achievable.
The properties of the real world graphs are summarized in Table~\ref{real-instanses} and further details can be found in the SNAP database~\cite{snapnets} and the work by Guimerà et al.~\cite{guimera2003self}.
\begin{table}[h]\small
	\begin{center}
		\caption{Properties of real world networks (sorted by $n$).}
		\begin{tabular}{c|c|c|c} 
			Dataset & $\#$Nodes & $\#$Edges & Direction  \\
			\midrule 
			\texttt{email-Eu-core} & $1005$ & $25571$ & Directed\\
			\texttt{Arenas} & $1133$ & $5451$ & Directed\\
			\texttt{Irvine} & $1899$ & $20296$ & Directed\\
			\texttt{Facebook} & $4039$ & $88234$ & Undirected\\
			\texttt{ca-GrQc} & $5242$ & $14496$ & Undirected\\
			\texttt{ca-HepTh} & $9877$ & $25998$ & Undirected\\
		\end{tabular}
		\label{real-instanses}
	\end{center}
\end{table}
\texttt{Arenas}~\cite{guimera2003self} and \texttt{email-Eu-core}~\cite{LeskovecKF07} are email communication networks at the University Rovira i Virgili (Spain) and a large European research
institution, respectively. Each user is represented by a node and each edge represents that at least one email
is sent between two users. In \texttt{email-Eu-core}, the community structure is defined by departments of the research institution where members of each community belongs to one of the 42
departments. \texttt{ca-GrQc} (General Relativity and
Quantum Cosmology) and \texttt{ca-HepTh} (High Energy Physics - Theory)~\cite{snapnets} networks represent
connection between individuals who co-authored at least one arXiv paper.
There is a node for each author and the network contains an undirected edge between two nodes if
they authored a paper in the same category.
\texttt{Facebook}~\cite{McAuleyL12} describes social circles (friends lists) for Facebook users, where
nodes are users and edges indicate the friendships between the users.
\texttt{Irvine}~\cite{OpsahlP09} is a network created from an online community. There is a node for each student at the University of California, Irvine, and an edge between two nodes represents that at least one
online message was sent among them. 

We use the IC model with uniformly random weights in $[0, 0.4]$ for the random and synthetic networks and $[0, 0.2]$ for real world instances. 

We consider the following different community structures. (1)~Singleton communities: each node forms its own community. (2) Random communities: each node is assigned u.a.r.\ to a community. (3)~BFS communities: for a predefined number of communities $m$, each community of size $n/m$ is generated by a breadth first search from a random source node (if the size of community does not reach $n/m$, we pick a new random node and continue the process), this results in rather connected communities. (4) Random-overlap communities: for a given $m$, a node is, each with probability $1/(m+2)$, (i)~in community $C_i$ for $i\in[m]$, (ii)~in no community, or (iii)~in all $m$ communities. (5) Leidenalg communities: communities detected by a common algorithm for community detection~\cite{leiden}.
(6)~Given communities for the synthetic networks and for some of the real world instances.

\paragraph{Experimental Setting.}
For \algo{grdy\_im}, we use the TIM implementation by Tang et al.~\cite{TangXS14}. We implement \algo{ind\_lp}, \algo{grdy\_grp+lp}, and \algo{maxmin+lp} in \texttt{C++}, use the TIM implementation in order to compute the sets $\SSS_1$ and $\SSS_2$ and gurobi~9.5.0~\cite{gurobi} for solving the LPs. For \algo{moso} we also choose gurobi as solver. For \algo{grdy\_prop}, if the resulting seed set is of size less than $k$ (because of overlaps or due to rounding) the seed set is extended with nodes that maximize the total spread. All experiments were executed on a compute server running Ubuntu 16.04.5 LTS with 24 Intel(R) Xeon(R) CPU E5-2643 3.40GHz cores and a total of 128 GB RAM.

The tested algorithms are implemented in two different programming languages: \algo{ind\_lp}, \algo{grdy\_grp+lp}, \algo{maxmin+lp}, \algo{grdy\_im}, \algo{grdy\_prop}, \algo{mult\_weight} are implemented in \texttt{C++} (compiled with g++ 7.5.0), while the algorithms \algo{grdy\_maxmin}, \algo{milp}, \algo{moso}, \algo{myopic}, \algo{uniform} are implemented in \texttt{python} (version 3.7.6). For consistency, the final evaluation of the computed solutions of all algorithms is still done in the same language (\texttt{python}). For this final evaluation, we use a constant number of 100 live-edge graphs for simulating the diffusion process. We note that using a constant number of live-edge graphs is a frequent choice~\cite{Fish19,FarnadiBG20,BeckerDGG21}, still, our algorithm's output is actually based on a larger number of live-edge graphs, 1000 in the case of \algo{ind\_lp}, and an even larger number for \algo{grdy\_grp+lp} and \algo{maxmin+lp}, namely as many as generated by the TIM implementation when computing $\SSS_1$ and $\SSS_2$. For the final evaluation of $\sigma_v(x)$ for $x\in\III$, we generate a number of sets $S\sim x$ sufficient to get an additive $\eps$-approximation with probability at least $1-\delta$, we use $\delta=\eps=0.1$. 
As all evaluated algorithms are randomized, we repeat each run 10 times per graph, for random and synthetic graphs, we in addition average over 5 graphs, thus resulting in 50 runs per algorithm. In all our 2-dimensional plots, we also show averages of the projections onto each dimension together with 95-\% confidence intervals. For algorithms that output distributions rather than sets, i.e., giving ex-ante guarantees, we evaluate both their overall coverage and their demographic parity violation \emph{in expectation}.
We ran experiments with a large variety of parameter settings and, due to space limitations, can only report on a subset of the experiments performed. The complete set of results can be found in the supplementary material. In our plots the overall (expected) coverage (as ratio of overall nodes) is on the vertical axis while the violation in demographic parity is on the horizontal axis.  We note that a perfect algorithm would achieve maximum overall coverage, while suffering zero violation in demographic parity, thus ending up in the top left of the plots. Further details about the experimental setting can be found in the supplementary material.

\paragraph{Running Times.}
We measure the running times of all algorithms on the random instances for increasing values of $n=50, 100, 200$, see Table~\ref{table-run-time}. We exclude \algo{uniform} as it takes constant time and \algo{milp} for $n>50$ as it does not terminate in less than 30 mins. 
The algorithms \algo{grdy\_im}, \algo{ind\_lp}, and \algo{myopic} are fastest. As we will see, unfortunately, the fairness achieved by \algo{grdy\_im} and \algo{myopic} is very poor. From the competitor algorithms, \algo{grdy\_maxmin}, \algo{milp} and \algo{moso} perform the worst in terms of running times and as their fairness values are not too good either, we exclude them from experiments involving the real-world instances. 

\begin{table}[t]\small
	\caption{Running times on random instances ($k=25$, singleton community structure) with 95\% confidence intervals.}
	\centering
	\begin{tabular}{ c|r|r|r }
		\toprule
		Algorithm  & $n=50$ & $n=100$ & $n=200$ \\
		\midrule
		\algo{grdy\_grp+lp}& $3.2\pm 0.1$ & $16.6\pm 0.6$ & $116.6\pm \textcolor{white}{0}5.7$ \\
		\hline
		\algo{maxmin+lp}& $6.5\pm 0.4 $& $39.8\pm 1.0$ & $232.9\pm 10.4$ \\ 
		\hline
		\algo{ind\_lp}& $0.6\pm 0.0$ & $1.1\pm 0.1$ & $1.9\pm \textcolor{white}{0}0.1$\\ 
		\hline
		\algo{grdy\_im} & $0.1\pm 0.0$ & $0.2\pm 0.0$ & $0.7\pm \textcolor{white}{0}0.0$\\ 
		\hline
		\algo{grdy\_maxmin}& $9.3\pm 0.3$ & $54.8\pm 2.0$ & $150.3\pm \textcolor{white}{0}9.0$\\ 
		\hline
		\algo{grdy\_prop}& $2.4\pm 0.1$ & $16.1\pm 0.6$ & $115.2\pm \textcolor{white}{0}5.5$\\ 
		\hline
		\algo{milp}& $70.3 \pm 4.6$ & \multicolumn{1}{c|}{--} & \multicolumn{1}{c}{--} \\ 
		\hline
		\algo{moso}& $87.3\pm 3.8  $& $138.8\pm 7.3$ & $194.2\pm 12.6$ \\
		\hline
		\algo{mult\_weight}& $3.4\pm 0.3$ & $20.2\pm 0.5$ & $ 97.2\pm \textcolor{white}{0}3.6$\\ 
		\hline
		\algo{myopic}& $2.0\pm 0.1$ & $4.1\pm 0.5$ & $4.8\pm \textcolor{white}{0}0.7$ \\ 
		\bottomrule 
	\end{tabular} \label{table-run-time}
\end{table}

\begin{figure}[ht]
	\centering
	\includegraphics[width=0.95\linewidth]{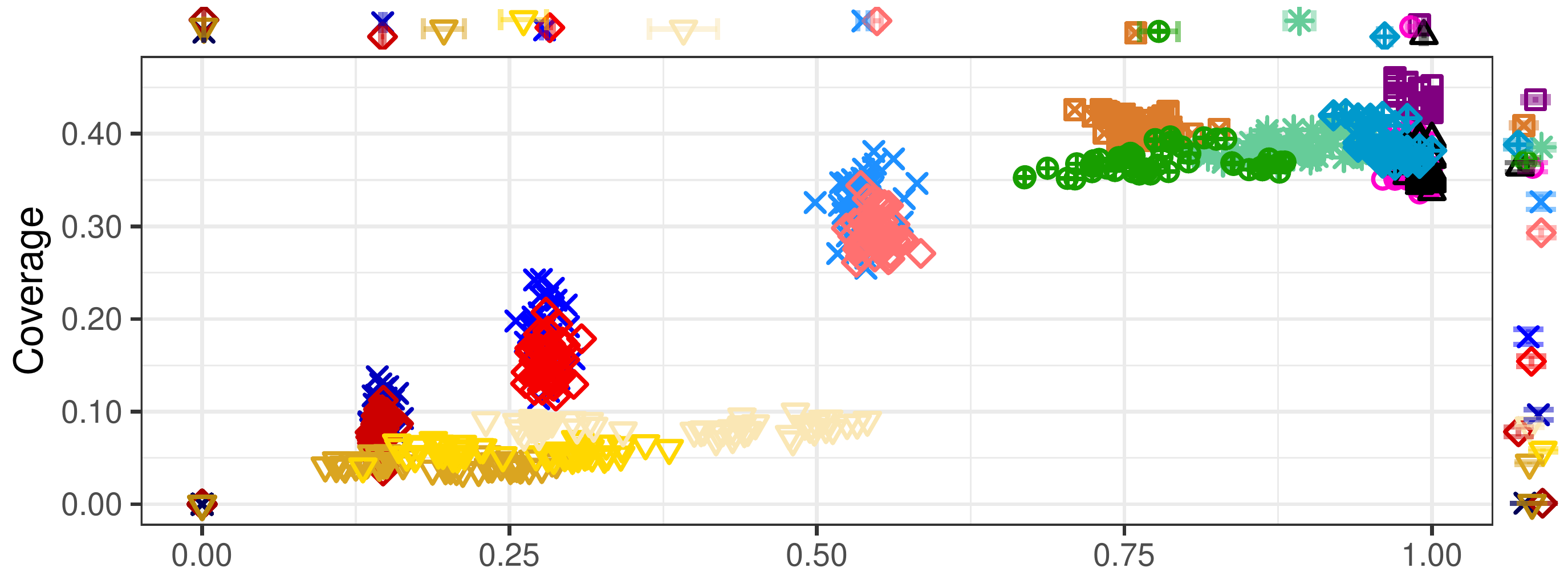} 
	\includegraphics[width=0.95\linewidth]{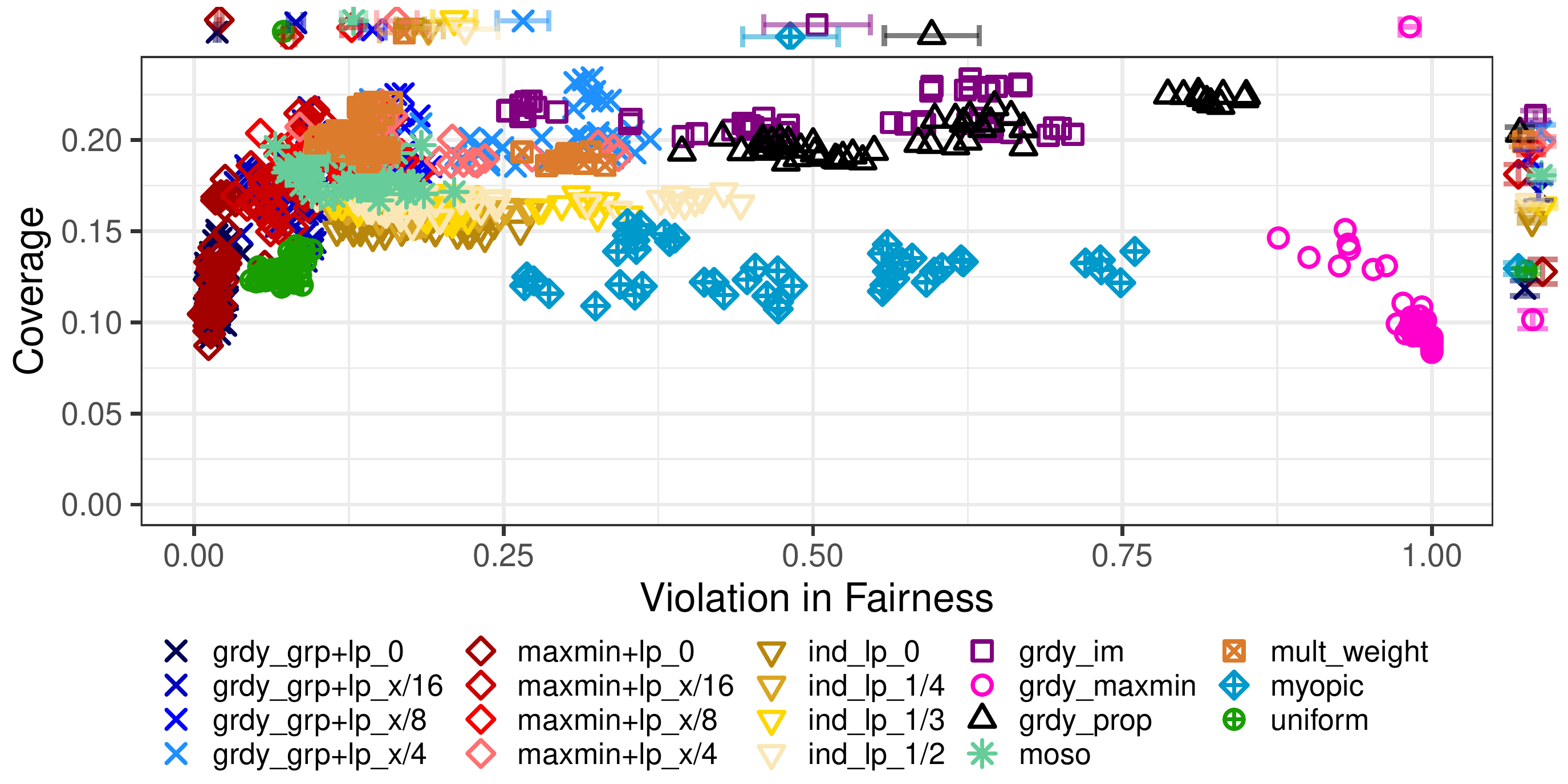} 
	\caption{(1)~Random instances ($k=25$, $n=200$, singleton communities), (2)~synthetic instances ($k=25$, $n=500$, communities induced by gender and region).}
	\label{fig:random synthetic}
\end{figure}

\paragraph{Results for Random and Synthetic Networks.}
We start with the random networks, see the top of Figure~\ref{fig:random synthetic}. We exclude \algo{milp} from this and all further experiment as it does not solve a single instance in less than 30 mins. All competitor algorithms suffer a fairness violation of more than 0.75 and achieve a coverage between 0.35 and 0.45. In the case of \algo{grdy\_im}, there is a fairness violation of almost 1. Next, note that our algorithms that are restricted to find perfectly fair solutions, i.e., \algo{grdy\_grp+lp\_0}, \algo{maxmin+lp\_0}, and \algo{ind\_lp\_0} obtain zero overall coverage. As we are in the setting of singleton communities, perfect demographic parity is a very strong requirement. Instead, if we use \algo{grdy\_grp+lp\_x/4} (\algo{maxmin+lp\_x/4}), where $x$ is the violation of \algo{grdy\_im} (here $\approx 1$), we still achieve $75\%$ ($67\%$) of \algo{grdy\_im}'s coverage while suffering a fairness violation of only 0.5. More generally, \algo{grdy\_grp+lp} and \algo{maxmin+lp} allow for a trade-off between coverage and fairness. If the user is for example willing to tolerate only a fairness violation of around 0.25, he can use \algo{grdy\_grp+lp\_x/8} (or \algo{maxmin+lp\_x/8}) and would still achieve $41\%$ (or $35\%$) of \algo{grdy\_im}'s coverage. Note that the algorithm \algo{ind\_lp} performs worse than \algo{grdy\_grp+lp} and \algo{maxmin+lp} in terms of coverage with similar fairness values.

\begin{figure}[htp]
	\centering
	\includegraphics[ width=0.95\linewidth]{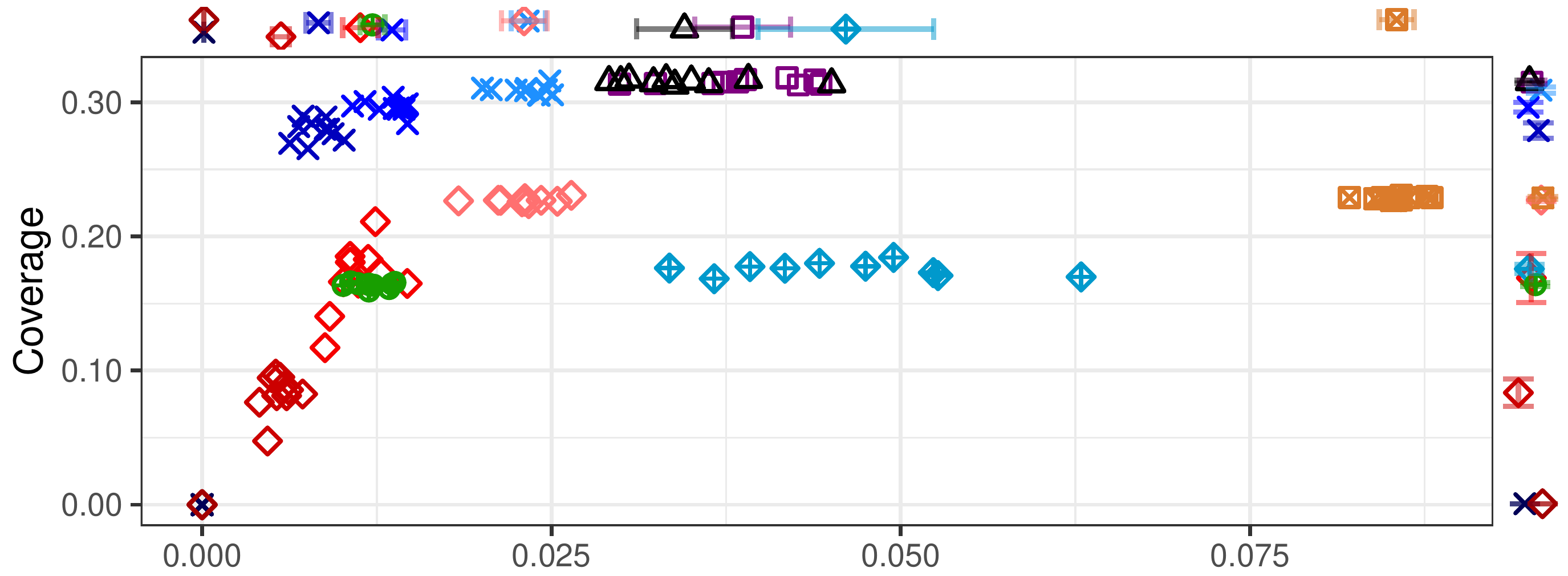} 
	\includegraphics[width=0.95\linewidth]{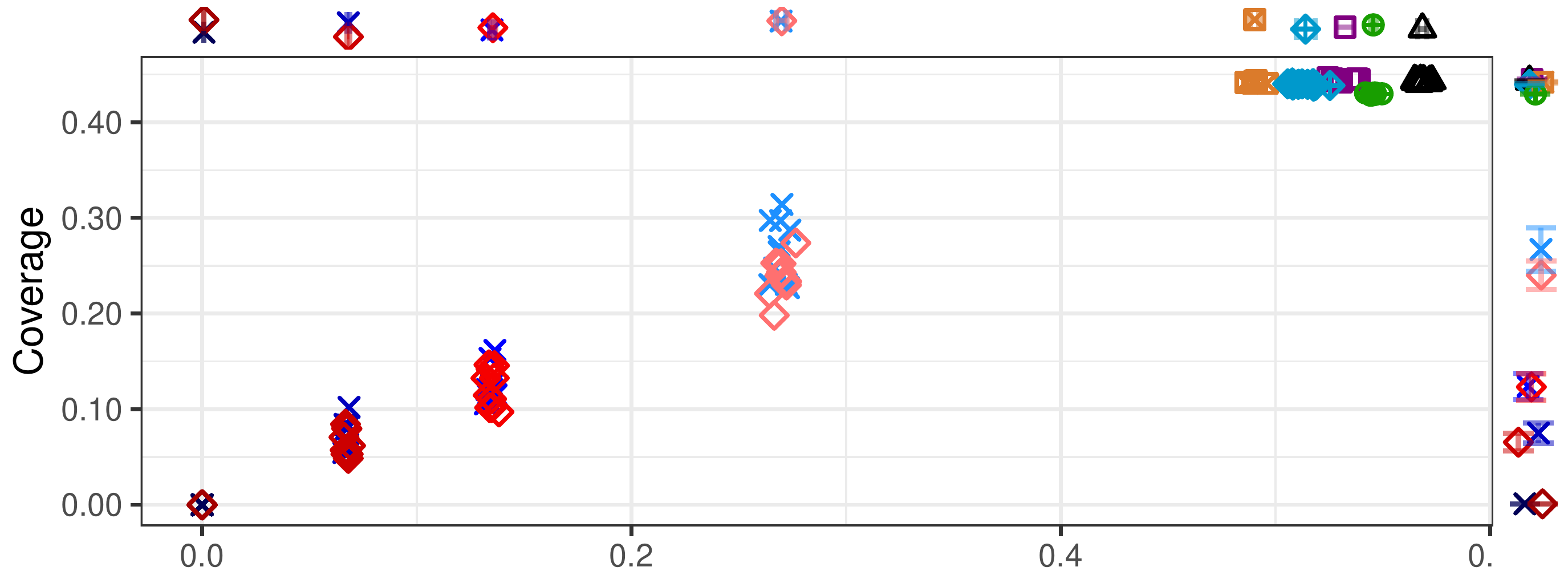} 
	\includegraphics[ width=0.95\linewidth]{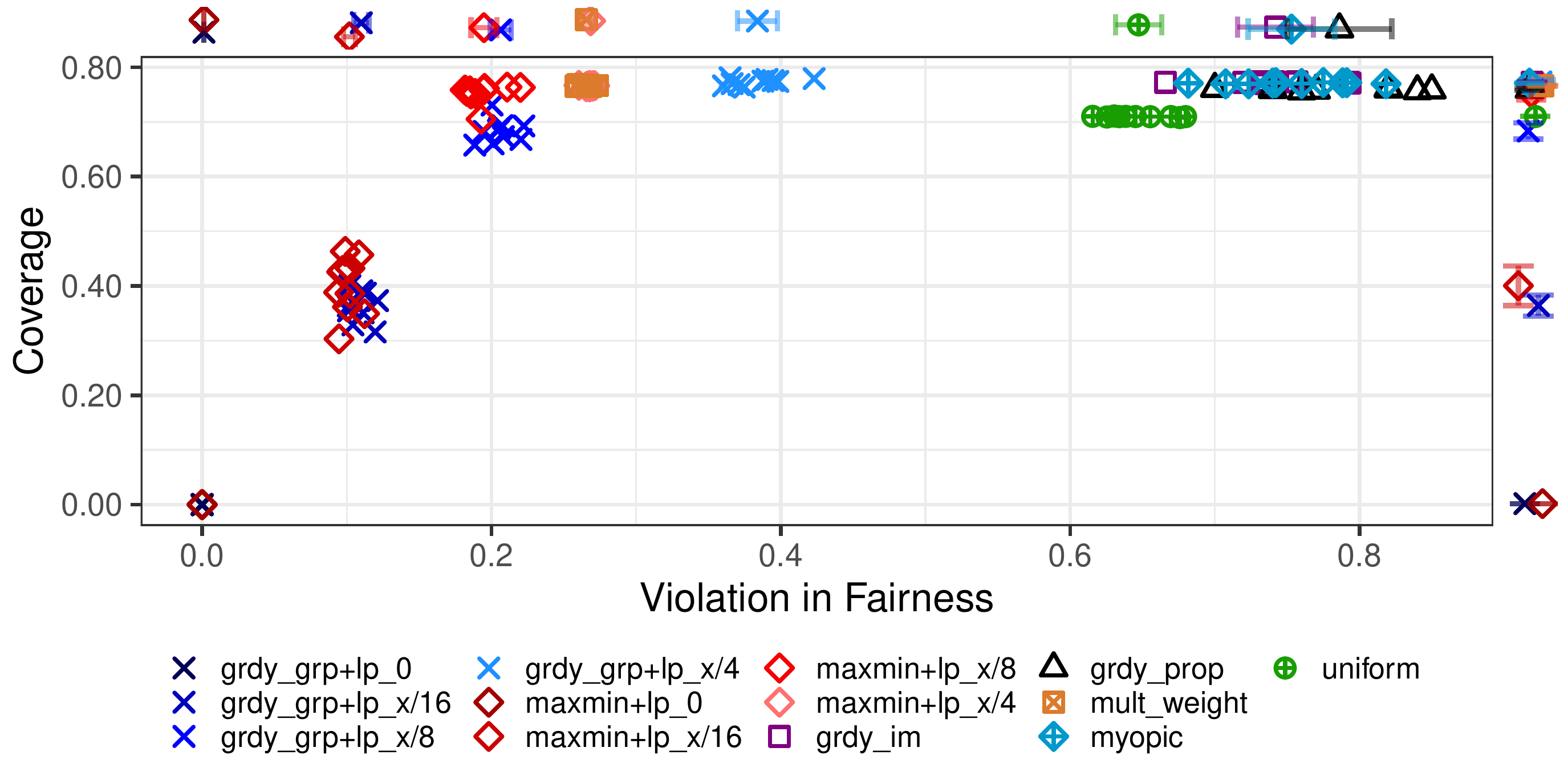} 
	\caption{(1)~\texttt{Arenas} (random-overlap communities, $m=10$, $k=100$), (2)~\texttt{Irvine} (BFS communities, $m=10$, $k=50$), (3)~\texttt{email-Eu-core} (real communities, $k=100$).}
	\label{fig: real world 1}
\end{figure}

\begin{figure}[htp]
	\centering
	\includegraphics[width=0.95\linewidth]{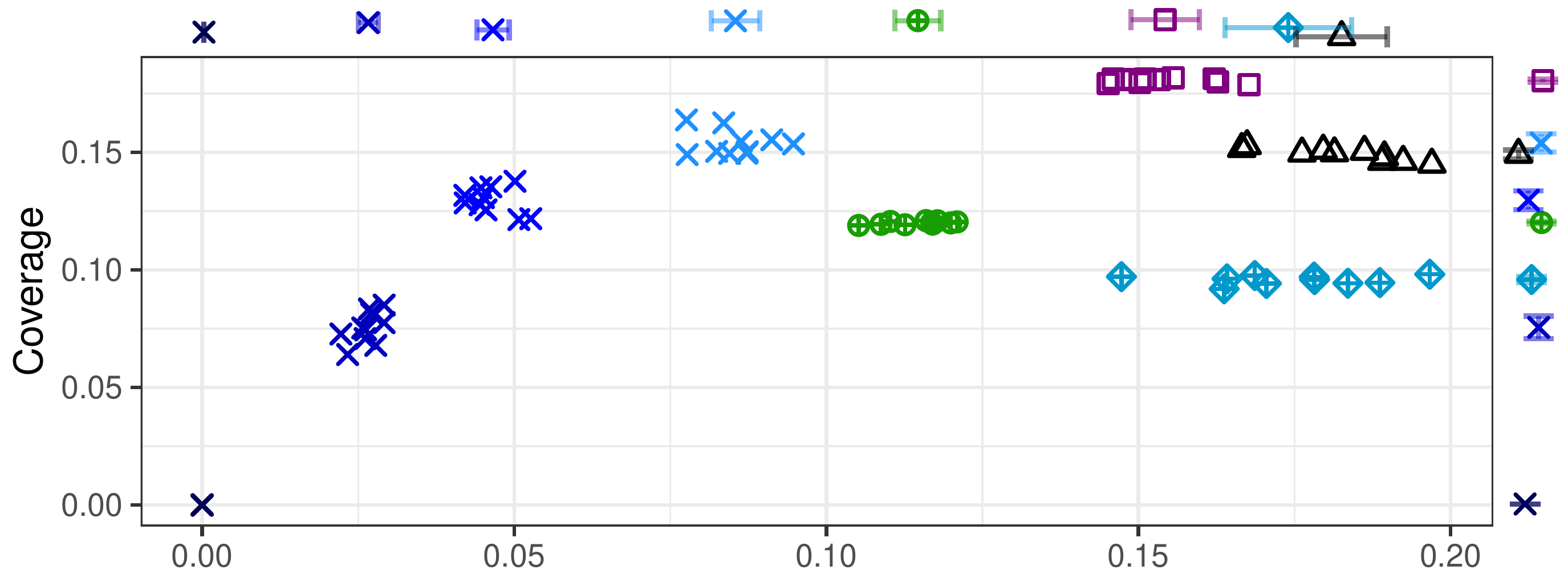} 
	\includegraphics[width=0.95\linewidth]{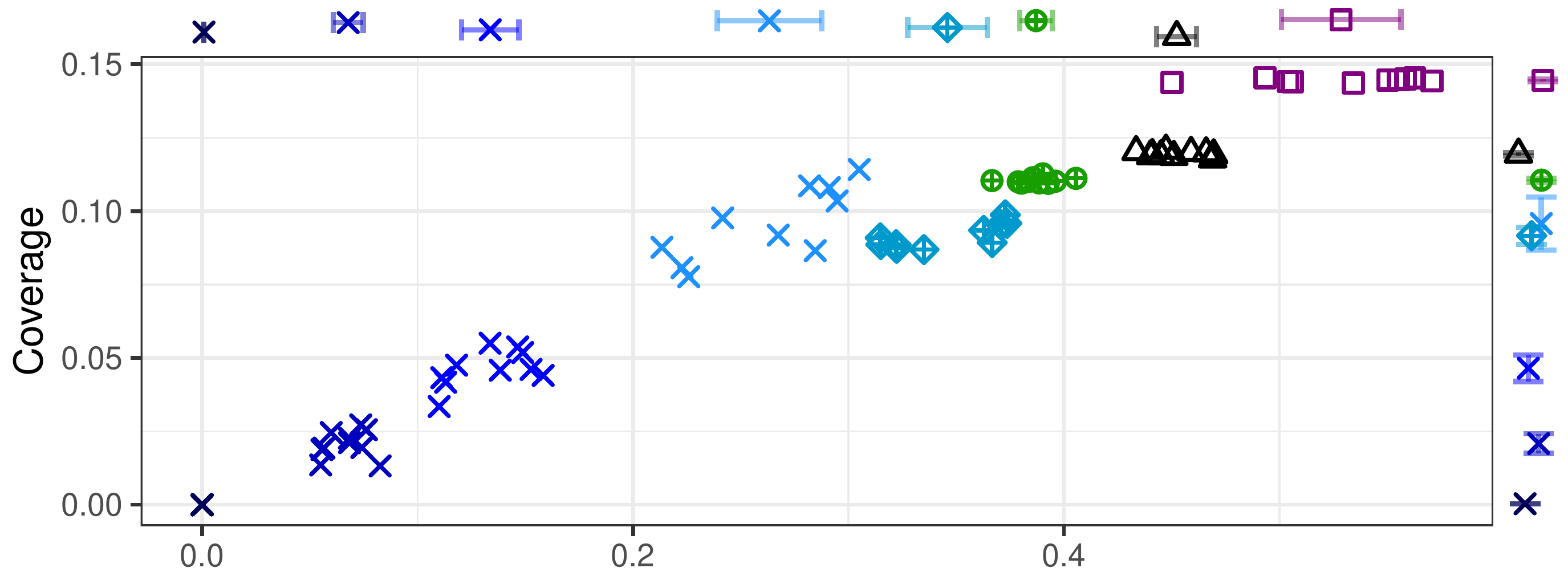} 
	\includegraphics[width=0.95\linewidth]{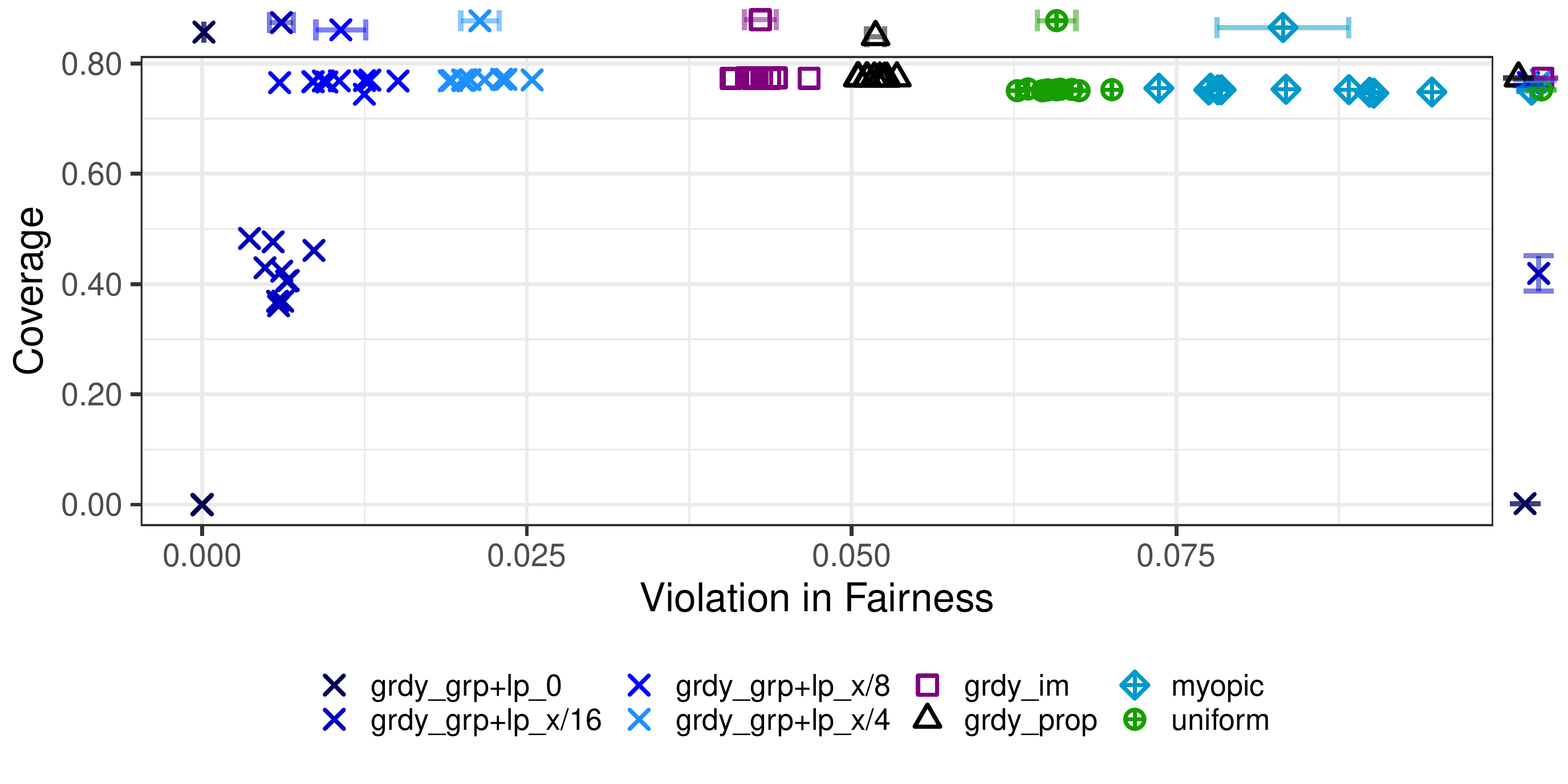} 
	\caption{(1)~\texttt{ca-GrQc} (leidenalg communities, $k=100$), (2)~\texttt{ca-HepTh} (random communities, $m=n/10$, $k=100$), (3)~ \texttt{Facebook} (BFS communities, $m=2$, $k=50$).}
	\label{fig:ca-HepTh}
\end{figure}

For the synthetic data sets of Wilder et al.~\cite{WilderOHT18}, see the lower plot in Figure~\ref{fig:random synthetic}, we show results for the community structure induced by the attributes gender and region consisting of $15$ communities of largely varying sizes. The best competitor algorithm in terms of fairness violation is \algo{uniform} with a fairness violation of around $0.07$, on the other hand it achieves a coverage of only around $0.13$. The \algo{moso} algorithm of Tsang et al.~\cite{TsangWRTZ19} achieves a fairness violation of around $0.13$ while achieving a coverage of around $0.18$. The \algo{grdy\_im} algorithm achieves the biggest coverage of around $0.21$, but suffers a huge fairness violation of around $0.5$. Here, our algorithms \algo{grdy\_grp+lp} and \algo{maxmin+lp} even achieve a decent overall coverage of $55\%$ and $60\%$ of \algo{grdy\_im}'s (comparable to, e.g., \algo{moso}) when we restrict to no fairness violation at all (note that there is still a tiny violation in fairness as the final evaluation is done with an independent sample of live-edge graphs). Furthermore, when we allow a fairness violation of $x/16$, where $x$ is the violation of \algo{grdy\_im}, our algorithms \algo{grdy\_grp+lp\_x/16} and \algo{maxmin+lp\_x/16} achieve a fairness violation of $0.08$ and $0.07$ with an overall coverage of $81\%$ and $85\%$ of \algo{grdy\_im}'s, respectively -- thus strictly dominating over \algo{grdy\_maxmin}, \algo{moso} and \algo{myopic}, while beating competitors in terms of fairness. We exclude \algo{ind\_lp} as it is not performing too well in terms of fairness and coverage in comparison to \algo{grdy\_grp+lp} and \algo{maxmin+lp} for further experiments.

\paragraph{Results for Real World Instances.}
We turn to the real world instances, see Figure~\ref{fig: real world 1} for some results on the networks \texttt{Arenas}, \texttt{Irvine}, and \texttt{email-Eu-core}. Our algorithms \algo{grdy\_grp+lp} and \algo{maxmin+lp} achieve the best demographic parity values by far. On the \texttt{Arenas} network, for example, we achieve a violation in demographic parity of only $0.008$, while getting more than $88\%$ of \algo{grdy\_im}'s coverage that in turn suffers an around 5 times higher fairness violation. 
On the \texttt{email-Eu-core} network, our algorithm \algo{maxmin+lp\_x/8} achieves a fairness violation around $0.2$ (a quarter of \algo{grdy\_im}), while still achieving essentially the same coverage.
We note that the simple heuristic \algo{grdy\_prop} performs even worse in terms of fairness than \algo{grdy\_im} on the \texttt{Irvine} network. 
We also note that all algorithms but \algo{grdy\_grp+lp}, \algo{maxmin+lp}, and \algo{mult\_weight} perform comparable to \algo{uniform} in terms of both coverage and fairness on \texttt{Irvine} and \texttt{email-Eu-core}.
Lastly, we report on the results for the co-authorship networks \texttt{ca-GrQc}, \texttt{ca-HepTh}, and the \texttt{Facebook} network. Due to running times we further restrict the evaluated algorithms by excluding also \algo{maxmin+lp} and \algo{mult\_weight}. Again \algo{grdy\_grp+lp} achieves the best fairness values by far. We again see a trade-off between fairness violation and overall coverage, i.e., in some cases no algorithm achieves low fairness violation while maintaining high coverage. Still in some other cases our algorithms achieve exactly that. For \texttt{Facebook}, \algo{grdy\_grp+lp\_x/16} obtains $55\%$ of \algo{grdy\_im}'s coverage with only $7\%$ of its fairness violation. Maybe even better, \algo{grdy\_grp+lp\_x/8} obtains $99\%$ of \algo{grdy\_im}'s coverage with only $23\%$ of its fairness violation.

\section{Conclusion}
We consider the impact of introducing strict demographic parity fairness via constraints in influence maximization through the study of three optimization problems, \IMDP, \PIMDP, and \iPIMDP -- in an ex-post in case of the former and in an ex-ante fashion in case of the latter two. After showing that this drastically differs from, e.g., the maximin criterion, we studied the price of introducing fairness via constraints in all three problems and observe that it may be unbounded. We then turned to investigating the computational complexity of the three optimization problems and observed that, unless $\PP=\NP$, one cannot approximate \IMDP in polynomial time even when the demographic parity fairness constraints are allowed to be violated by a multiplicative or additive term. For \PIMDP, we show that the problem is NP-hard, while for \iPIMDP we even show that it cannot be approximated within a factor better than $1-1/e$ unless $\PP=\NP$. We then proposed algorithms for \PIMDP and \iPIMDP. In the case of \iPIMDP we essentially gave a $1-1/e$-approximation algorithm that violates the fairness constraints by at most a $1-1/e$-factor as well. For \PIMDP we gave two heuristics that allow the user to freely choose the level of tolerated fairness violation. In an extensive experimental study, we then showed that these three algorithms, and particularly the latter two, perform well in practice. That is, for random, synthetic, and real word instances, we obtain the best demographic parity fairness values among all competitors and for certain instances even obtain comparable overall spread. The latter indicates that the \emph{empirical} price of demographic parity fairness may actually be small when using our algorithms in practice.

\bibliography{references}



\end{document}